\newtheorem{theorem}{Theorem}{\bfseries}{\normalfont}
\newtheorem{lemma}{Lemma}{\bfseries}{\normalfont}
\newtheorem{definition}{Definition}{\bfseries}{\normalfont}
\newtheorem{proposition}{Proposition}{\bfseries}{\normalfont}
\newtheorem{rrule}{Rule}{\bfseries}{\normalfont} 
\newtheorem{corollary}{Corollary}{\bfseries}{\normalfont}
{\bfseries}{\normalfont}
\newcommand{\probDefEnv}[1]{%
  \begin{center}%
    \begin{minipage}{0.9\linewidth}%
		#1
    \end{minipage}%
  \end{center}%
}
\crefname{rrule}{Reduction Rule}{Reduction Rules}
\crefname{chapter}{Chapter}{Chapters}
\crefname{section}{Section}{Sections}
\crefname{subsection}{Section}{Sections}
\crefname{theorem}{Theorem}{Theorems}
\crefname{obs}{Observation}{Observations}
\crefname{proposition}{Proposition}{Propositions}
\crefname{corollary}{Corollary}{Corollaries}
\crefname{lemma}{Lemma}{Lemmata}
\crefname{equation}{Equivalence}{Equivalences}
\crefname{figure}{Figure}{Figure}
\def\NAT@spacechar{~}
\DeclareRobustCommand{\NoKernelAssume}{$\text{NP}\subseteq \text{{coNP/poly}}$}
\newcommand{\N}{\mathds{N}}
\newcommand{\kDegAnon}{\textsc{Degree Anonymity}\xspace}
\newcommand{\DCE}[1][$S$]{\textsc{DCE}\textnormal{(#1)}\xspace}
\newcommand{\NCE}{\textsc{NCE}\xspace}
\newcommand{\clique}{\textsc{Clique}\xspace}
\newcommand{\IS}{\textsc{Independent Set}\xspace}
\newcommand{\VC}{\textsc{Vertex Cover}\xspace}
\newcommand{\PiEA}{$\Pi$\nobreakdash-\textsc{DSC}\xspace}
\newcommand{\PiNA}{$\Pi$\nobreakdash-\textsc{NSC}\xspace}
\newcommand{\PiDec}{$\Pi$\nobreakdash-\textsc{Decision}\xspace}
\newcommand{\degList}{\ensuremath{\tau}}
\newcommand{\numList}{\ensuremath{\phi}}
\newcommand{\fFactor}{\textsc{$f$\nobreakdash-Factor}\xspace}
\newcommand{\ffactor}[1][f]{\ensuremath{#1}\nobreakdash-factor\xspace}
\newcommand{\factorSizeBound}{\ensuremath{(r+1)^2}}
\newcommand{\realizationSizeBound}{\ensuremath{r(r+1)^2}}
\newcommand{\blockSet}[1][\alpha]{\ensuremath{#1}-block set\xspace}
\newcommand{\typeSet}[1][\alpha]{\ensuremath{#1}-type set\xspace}
\begin{document}


\title{Win-Win Kernelization for \\ Degree Sequence Completion Problems}
\author[]{Vincent Froese\footnote{Supported by Deutsche Forschungsgemeinschaft, project DAMM (NI 369/13).}}
\author[]{André Nichterlein}
\author[]{Rolf Niedermeier}

\affil[]{Institut f\"ur Softwaretechnik und Theoretische Informatik,
  TU Berlin, Germany, \texttt{\{vincent.froese, andre.nichterlein, rolf.niedermeier\}@tu-berlin.de}}
\date{}

\maketitle
\thispagestyle{scrheadings}
\cfoot{}
\ohead{}
\ifoot{To appear in \emph{Journal of Computer and System Sciences}, 2016. A~manuscript of this article has been published in \emph{Proceedings of the 14th Scandinavian Symposium and Workshops on Algorithm Theory (SWAT' 14), Volume 8503 of LNCS, pp. 194--205, 2014. \copyright~Springer.}}
\begin{abstract}
We study provably effective and
efficient data reduction for a class of NP-hard graph modification problems based on vertex degree 
properties. 
We show fixed-parameter tractability for NP-hard graph completion (that is, edge addition) cases while we show that there is no hope to achieve analogous results for the corresponding vertex or edge deletion versions. 
Our algorithms are based on transforming graph completion problems into efficiently solvable number problems and exploiting $f$-factor computations for translating the results back into the graph setting.
Our core observation is that we encounter a win-win situation: either the number of edge additions is small or the problem is polynomial-time solvable. 
This approach helps in answering an open question by Mathieson and Szeider~[\emph{JCSS}~2012] concerning the polynomial kernelizability of \textsc{Degree Constraint Edge Addition} and leads to a general method of approaching polynomial-time preprocessing for a wider class of degree sequence completion problems.
\end{abstract}

\section{Introduction}\label{sec:Intro}
We propose a general approach for achieving
polynomial-size problem kernels for a class of graph completion
problems where the goal graph has to fulfill certain degree
properties. Thus, we explore and enlarge results on provably effective
polynomial-time preprocessing for these NP-hard graph problems.
To a large extent, the initial motivation for our work comes from
studying the NP-hard 
graph modification problem \textsc{Degree Constraint
Editing($S$)} for non-empty subsets~$S \subseteq
\{\text{v}^-,\text{e}^+,\text{e}^-\}$ of editing operations
(v$^-$: ``vertex deletion'', e$^+$: ``edge addition'', e$^-$: ``edge
deletion'') as introduced by \citet{MS12}.\footnote{\citet{MS12}
originally introduced a weighted version of the problem,
where the vertices and edges can have positive integer weights
incurring a cost for each editing operation. Here, we focus on
the unweighted version.}
The definition reads as follows.

\probDefEnv{\defDecprob{Degree Constraint Editing($S$) (\DCE)}
{An undirected graph~$G=(V,E)$, two integers~$k, r >0$, and a ``degree list function''~$\degList\colon V \rightarrow 2^{\{0,\dotsc,r\}}$.}
{Is it possible to obtain a graph $G'=(V',E')$ from $G$ using at most~$k$ editing operations of type(s) as specified by~$S$ such that~$\deg_{G'}(v) \in \degList(v)$ for all~$v \in V'$?}}

\noindent In our work, the set~$S$ always consists of a single editing
operation.
Our studies focus on the two most obvious parameters: the
number~$k$ of editing operations and the maximum allowed degree~$r$.
We will show that, although all three variants are
NP-hard, \DCE[e$^+$] is amenable to a generic kernelization method we propose.
This method is based on dynamic programming solving a corresponding
number problem and $f$-factor computations. For \DCE[e$^-$] and
\DCE[v$^-$], however, we show that there is little hope to achieve
analogous results.

\paragraph{Previous Work}
There are basically two fundamental starting points for our work.
First, there is our previous theoretical work on degree
anonymization\footnote{For a given integer~$k$, the task here is to add as few edges as possible to a graph such that the resulting graph is \emph{$k$-anonymous}, that is, there is no vertex degree in the graph which occurs at least one but less than $k$~times.} 
in social networks~\cite{HNNS15} motivated and strongly inspired by a preceding heuristic approach due to~\citet{LT08} (also see \citet{CLT10} for an extended version). 
Indeed, our previous work for degree anonymization 
inspired empirical work with encouraging experimental results~\cite{HHN14}.
A fundamental contribution of this work now is to systematically reveal what the problem-specific parts (tailored towards degree anonymization) and what the ``more general'' parts of that approach are. 
In this way, we develop this approach into a general method of wider applicability for a number of graph completion problems based on degree properties. 
The second fundamental starting point is Mathieson and Szeider's work~\cite{MS12} on (weighted) \DCE[$S$]. 
They showed several exponential-size problem kernels for the operations vertex deletion and edge deletion.
For the case that the degree list of each vertex contains only one number, they even obtain
a polynomial-size problem kernel.
They left open, however, whether it is possible to reduce
\DCE[e$^+$] in polynomial time to a problem kernel of size polynomial in~$r$---we will affirmatively answer this question.
Indeed, while Mathieson and Szeider provided several fixed-parameter
tractability results with respect to the combined parameter $k$
and~$r$, they partially left open whether similar results can be
achieved using the stronger parameterization\footnote{Fixed-parameter
  tractability with respect to the parameter~$r$ (trivially) implies
  fixed-parameter tractability with respect to the combined
  parameter~$(k,r)$, but the reverse clearly does not hold in general;
  see \citet{KN12} for a broader discussion in this direction.} with the single parameter~$r$. 
Recently, \citet{Gol14} described kernelization and fixed-parameter results for closely related graph editing problems where vertex and edge deletions and edge insertions are allowed, the degree list of each vertex contains exactly one number, and the resulting graph has to be connected. 

From a more general perspective, all these considerations fall into the category of ``graph editing to fulfill degree constraints'', which recently received significant interest in terms of parameterized complexity analysis~\cite{BBHNW16,FGMN11,Gol14,MT09}. 

\medskip
\noindent\textit{Our Contributions.}
Answering an open question of Mathieson and Szeider~\cite{MS12}, we present
an~$O(kr^2)$-vertex kernel for \DCE[e$^+$] which we then
transfer into an~$O(r^5)$-vertex kernel using a strategy
rooted in previous work~\cite{LT08,HNNS15}. A further main contribution 
of our work in the spirit of meta kernelization \cite{BFLPST09} is to clearly separate problem-specific from problem-independent 
aspects of this strategy, thus making it accessible to a wider 
class of degree sequence completion problems. 
We observe that if the goal graph shall have ``small'' maximum degree~$r$,
then the actual graph structure is in a sense negligible and  thus 
allows for a lot of freedom
that can be algorithmically exploited. 
This paves the way to a \emph{win-win situation} of either having guaranteed a small number of edge additions or the overall problem being solvable in polynomial-time anyway---another example in the list of win-win situations exploited in parameterized algorithmics~\cite{Fel03WG}.

Besides our positive kernelization results, we exclude polynomial-size
problem kernels for \DCE[e$^-$] and \DCE[v$^-$] subject to the
assumption that~$\text{NP}\not\subseteq \text{{coNP/poly}}$, thereby showing that
the exponential-size kernel results by \citet{MS12} are essentially tight.
In other words, this demonstrates that in our context edge completion is much more 
amenable to kernelization than edge deletion or vertex deletion are.
We also prove NP-hardness of \DCE[v$^-$] and \DCE[e$^+$] for graphs of
maximum degree three, implying that the maximum degree is not a useful
parameter for parameterized complexity or kernelization purposes.
Last but not least, we develop a general preprocessing approach for
\textsc{Degree Sequence Completion} problems which yields a search
space size that is polynomially bounded in the parameter.
While this per se does not give polynomial kernels, we derive fixed-parameter tractability with respect to the combined parameter maximum degree and solution size. 
The usefulness of our method is illustrated by further example degree sequence completion problems.

\medskip
\noindent\textit{Notation.}
All graphs in this paper are undirected, loopless, and simple (that is, without multiple edges).
For a graph~$G = (V,E)$, we set~$n := |V|$ and~$m := |E|$.
The degree of a vertex~$v \in V$ is denoted by~$\deg_G(v)$, the \emph{maximum vertex degree} by $\Delta_G$, and the \emph{minimum vertex degree} by~$\delta_G$.
For a finite set~$U$, we denote by~$\binom{U}{2}$ the set of all size-two subsets of~$U$.
We denote by~$\overline{G} := (V,{V \choose 2} \setminus E)$ the \emph{complement graph} of~$G$.
For a vertex subset~$V' \subseteq V$, the subgraph induced by~$V'$ is denoted by~$G[V']$.
For an edge subset~$E'\subseteq {V \choose 2}$, $V(E')$ denotes the set of all endpoints of edges in~$E'$ and $G[E']:=(V(E'),E')$.
For a set~$E'$ of edges with endpoints in a graph~$G$, we denote by~$G+E' := (V, E \cup E')$ the graph that results from inserting all edges in~$E'$ into~$G$.
Similarly, we define for a vertex set~$V' \subseteq V$, the graph~$G - V' := G[V \setminus V']$.
For each vertex $v\in V$, we denote by $N_G(v):=\{u\in V\mid \{u,v\}\in E\}$ the \emph{open neighborhood} of~$v$ in~$G$ and by $N_G[v]:=N_G(v)\cup \{v\}$ the \emph{closed neighborhood}. 
We omit subscripts if the corresponding graph is clear from the context.
A vertex~$v \in V$ with $\deg(v) \in \degList(v)$ is called \emph{satisfied} (otherwise \emph{unsatisfied}).
We denote by~$\mathcal{U} \subseteq V$ the set of all unsatisfied vertices, formally $\mathcal{U} := \{v \in V \mid \deg_G(v) \notin \degList(v)\}$.

\medskip
\noindent\textit{Parameterized Complexity.}
This is a two-di\-men\-sional framework for studying computational complexity~\citep{DF13,FG06,Nie06}.
One di\-men\-sion of a parameterized problem is the input size~$s$, and the other one is the \emph{parameter} (usually a positive integer).
A parameterized problem is called \emph{fixed-parameter tractable} (fpt) with respect to a parameter~$\ell$ if it can be solved in~$f(\ell)\cdot s^{O(1)}$ time, where~$f$ is a computable function only depending on~$\ell$.
This definition also extends to \emph{combined parameters}. 
Here, the parameter usually consists of a tuple of positive integers~$(\ell_1, \ell_2, \ldots)$ and a parameterized problem is called fpt with respect to~$(\ell_1, \ell_2, \ldots)$ if it can be solved in~$f(\ell_1, \ell_2, \ldots)\cdot s^{O(1)}$ time.

A core tool in the development of fixed-parameter algorithms is
polynomial-time preprocessing by \emph{data reduction}~\citep{Kra14,GN07a}. 
Here, the goal is to transform a given problem instance~$I$ with parameter~$\ell$ in polynomial time into an equivalent instance~$I'$ with parameter~$\ell'\leq g(\ell)$ for some function~$g$ such that the size of~$I'$ is upper-bounded by some function~$h$ depending only on~$\ell$.
If this is the case, we call~$I'$ a (problem) \emph{kernel} of size~$h(\ell)$.
If~$h$ is a polynomial, then we speak of a \emph{polynomial kernel}.
Usually, this is achieved by applying polynomial-time computable data reduction rules.  
We call a data reduction rule~$\mathcal{R}$ \emph{correct} if the new instance~$I'$ that results from applying~$\mathcal{R}$ to~$I$ is a yes-instance if and only if~$I$ is a yes-instance.  
The whole process is called \emph{kernelization}. 
It is well known that a parameterized problem is fixed-parameter tractable if and only if it has a problem kernel \cite{CCDF97}.

\section{Degree Constraint Editing}
\citet{MS12} showed fixed-parameter tractability for \DCE for all
non-empty subsets~$S \subseteq \{v^-, e^-, e^+\}$ with respect to the
combined parameter~$(k,r)$ and W[1]-hardness with respect to the
single parameter~$k$.
The fixed-parameter tractability is in a sense tight as \citet{Cor88}
proved that \DCE[e$^-$] is NP-hard on planar graphs with maximum
degree three and with maximum allowed degree~$r=3$ (that is, it is
paraNP-hard with respect to~$r$, meaning NP-hard for a constant
parameter value~$r$) and thus presumably not fixed-parameter
tractable with respect to~$r$ (unless P=NP).
We complement his result by showing that \DCE[v$^-$] is NP-hard on cubic (that is three-regular) planar graphs, even if~$r=0$, and that \DCE[e$^+$] is NP-hard on planar graphs with maximum degree three.
(Note that in this case, even though the input graph is planar,
the graph obtained from edge additions does not necessarily have to be planar.)

\begin{proposition}\label{thm:DCEv-NPhard}
	\DCE[v$^-$] is NP-hard on cubic planar graphs, even if~$r=0$.
\end{proposition}
\begin{proof}
	We provide a polynomial-time many-one reduction from the NP-hard \VC on cubic planar graphs~\cite{GJ79}.
	Given a cubic graph~$G = (V,E)$ and a positive integer~$h$, the \VC problem asks for a subset~$V'$ of at most~$h$ vertices such that each edge in~$E$ has at least one endpoint in~$V'$.
	Let~$I = (G = (V,E),h)$ be a \VC instance with the cubic planar graph~$G$.
	Now consider the \DCE[v$^-$] instance $I' = (G,h,0,\degList)$ with~$\degList(v) = \{0\}$ for all~$v\in V$.
	Observe that~$I$ and~$I'$ are yes-instances if and only if at most~$h$ vertices can be removed from~$G$ such that the resulting graph is edgeless.
	Hence, $I$ is a yes-instance if and only if~$I'$ is a yes-instance.
\end{proof}

\begin{proposition}\label{thm:DCEe+NPhard}
	\DCE[e$^+$] is NP-hard on planar graphs with maximum degree three.
\end{proposition}
\begin{proof}
	We provide a polynomial-time many-one reduction from the NP-hard \IS problem on cubic planar graphs~\cite{GJ79}.
	Given a cubic graph~$G = (V,E)$ and a positive integer~$h$, the \IS problem asks for a subset~$V'$ of at least~$h$ pairwise non-adjacent vertices.
	Given an \IS instance~$(G = (V,E), h)$, we construct a \DCE[e$^+$] instance~$(G',k,h,\degList)$ as follows:
	Start with~$G'$ as a copy of~$G$, add a new vertex~$v$ to~$G'$ and set~$\degList(v) := \{h\}$. 
	Furthermore, for all other vertices~$u \in V$ set~$\degList(u) = \{3, 3+h\}$.
	Finally, set~$k := \binom{h}{2}+h$.
	It is straightforward to argue that the only way of satisfying~$v$ within the given budget is to connect it to~$h$ vertices forming an independent set.
\end{proof}

We remark that in \cref{thm:DCEe+NPhard} we do \emph{not} require that the output graph is planar.
In contrast to \DCE[e$^-$] and \DCE[v$^-$], unless P${}={}$NP, \DCE[e$^+$] cannot be NP-hard for constant values of~$r$ since we will later show fixed-parameter tractability for \DCE[e$^+$] with respect to the parameter~$r$.

\subsection{Excluding Polynomial-Size Problem Kernels}
\citet{MS12} gave exponential-size problem kernels for \DCE[v$^-$] and \DCE[\{v$^-$, e$^-$\}] with respect to the combined parameter~$(k,r)$.
We prove that these results are tight in the sense that, under standard complexity-theoretic assumptions, neither \DCE[e$^-$] nor \DCE[v$^-$] admits a polynomial-size problem kernel when
parameterized by~$(k,r)$.
Note that \citet{Gol15} showed that the problem variant \DCE[v$^-$, e$^+$] (vertex deletions and edge insertions are allowed) where the degree list of each vertex contains exactly one number does not admit a polynomial-size kernel unless $\text{NP}\not\subseteq \text{{coNP/poly}}$.


\begin{theorem} \label{thm:no-poly-kernel-DCEe}
	\DCE[e$^-$] does not admit a polynomial-size problem kernel with respect to~$(k,r)$ unless \NoKernelAssume.
\end{theorem}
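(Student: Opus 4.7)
The plan is to establish this lower bound via an OR-cross-composition from an NP-hard restriction of \DCE[e$^-$]. By the result of \citet{Cor88} cited earlier, \DCE[e$^-$] is already NP-hard on cubic planar graphs with $r=3$, so I would compose $t$ such instances into a single \DCE[e$^-$] instance whose combined parameter $(k',r')$ grows only polynomially in the size of one input instance plus $\log t$.

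First, I fix a polynomial equivalence relation on the source problem: two instances $(G_i, k_i, 3, \tau_i)$ are equivalent if $|V(G_i)|$ and $k_i$ agree, and the combinatorial pattern of $\tau_i$ on the unsatisfied vertices is identical (this is a polynomial-time checkable relation with polynomially many classes, since $\tau_i(v)\subseteq\{0,1,2,3\}$). For a family of $t$ equivalent instances $I_1,\ldots,I_t$ I may assume $t=2^s$ with $s=\lceil\log_2 t\rceil$, padding with a trivial yes-instance if needed.

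The composed instance is built from the disjoint union $G^{\star}=G_1\sqcup\cdots\sqcup G_t$ together with a \emph{selector subgraph} $S$ on $O(s)$ vertices that encodes the binary index of a single ``active'' instance by which of its edges get deleted. I would then attach, for each $v\in V(G^{\star})$, a small $O(s)$-edge ``pattern matcher'' between $v$ and $S$, and set the composed degree list $\tau^{\star}(v)$ so that the following dichotomy holds: if $v\in V(G_j)$ and the selector does not encode the index $j$, then every choice of edge deletions inside $G_j$ keeps $v$ satisfied; if the selector does encode index $j$, then the new degree list on $v$ reduces to $\tau_j(v)$ shifted by a constant, so that satisfying it inside $G_j$ is exactly the original \DCE[e$^-$] problem on $I_j$.

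The main obstacle is designing the selector together with the matchers so that three properties hold at once: (a) the selector itself costs only $O(s)$ edge deletions; (b) the degree contribution at each $v\in V(G^{\star})$ from its matcher is at most $O(s)$, yielding $r'\le 3+O(s)$; and (c) the degree lists admit no ``partial'' solution that cheats by mixing edits across several $G_j$'s. A binary-tree selector with degree-list-controlled matcher vertices is the natural implementation, but verifying soundness---showing that every solution must commit to a single index $j^{\star}$ rather than modifying more than one $G_j$ at once---requires a careful case analysis on the admissible internal selector deletions, and I expect this to be the most delicate step.

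Once the construction is in place, $k'\le k_0+O(s)$ and $r'\le 3+O(s)$ are both polynomial in $n+\log t$, and the equivalence ``$G^{\star}$ is a yes-instance iff some $I_j$ is a yes-instance'' follows from the dichotomy above. Invoking the standard OR-cross-composition framework then yields the claimed kernel lower bound under the assumption \NoKernelAssume.
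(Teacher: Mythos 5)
Your overall strategy (an OR-composition of many NP-hard \DCE[e$^-$] instances with a selector that forces one instance to be solved) is a legitimate alternative to what the paper actually does---the paper instead gives a polynomial time and parameter transformation from \clique parameterized by the vertex cover number, inheriting the lower bound from \citet{BJK14}. However, as written your construction has a concrete flaw that breaks the parameter bounds. You place the selector $S$ on only $O(s)=O(\log t)$ vertices and attach an $O(s)$-edge ``pattern matcher'' from \emph{every} vertex of $G^{\star}$ to $S$. Then each selector vertex receives edges from $\Omega(t\cdot n)$ matchers, so its degree is $\Omega(tn)$. Since degree lists must be subsets of $\{0,\dotsc,r'\}$, such a vertex can only be satisfied if either $r'\ge\Omega(tn)$ (destroying the required bound $r'\le 3+O(s)$) or $\Omega(tn)$ incident edges are deleted (destroying $k'\le k_0+O(s)$). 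There is no way to route per-vertex information from all $tn$ vertices through an $O(\log t)$-vertex hub while keeping both $k'$ and $r'$ polynomial in $n+\log t$. Note that a cross-composition does not require the selector to be small in \emph{size}---only the parameter must be bounded---so the natural fix is a binary tree with $t$ leaves (this is exactly the shape of the paper's gadget), where each vertex of $G_j$ is attached to $O(1)$ instance-local activation vertices rather than to a shared logarithmic-size selector.

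The second, and more substantial, gap is the one you yourself flag: the soundness argument that every solution must commit to a single index $j^{\star}$, that no budget-feasible ``mixed'' or partial edits across several $G_j$'s can fake a yes-answer, and that activating $G_j$ reduces the residual constraints exactly to $\tau_j$ is not given, and saying the degree list ``reduces to $\tau_j(v)$ shifted by a constant'' is not yet a mechanism. To make this work you need an explicit design, e.g.\ one activation edge per vertex $v\in V(G_j)$ to a per-instance activation vertex $u_j$, degree lists $\tau^{\star}(v)=\{\deg_{G_j}(v)+1\}\cup\tau_j(v)$, an all-or-nothing list on $u_j$ of the form $\{\deg(u_j),\deg(u_j)-|V(G_j)|-1\}$, inner-tree lists forcing zero or two deletions, and a budget of $\lceil\log t\rceil+|V(G_j)|+k_0$ (with the polynomial equivalence relation equalizing $|V(G_j)|$ and $k_0$ across instances); one must then verify that after the forced path and activation deletions the remaining deletions inside $G_j$ land every final degree in $\tau_j(v)$, which is precisely the case analysis your proposal defers. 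Until that gadget and its correctness proof are spelled out, the argument does not go through; the paper avoids this difficulty by exploiting the rigid structure of \clique (cliques of size $h$, twin classes over a vertex cover $X$, and lists of the form $\{\deg(v),\deg(v)-h\}$) rather than composing \DCE[e$^-$] with itself.
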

\begin{proof}
	We provide a polynomial time and parameter transformation from the \clique problem parameterized by the ``vertex cover number''.
	Given a graph~$G = (V,E)$ and a positive integer~$h$, the \clique problem asks for a subset of at least~$h$ vertices that are pairwise adjacent. 
	A vertex cover for a graph~$G$ is a vertex subset~$V'$ such that each edge in~$E$ has at least one endpoint in~$V'$.
	The \emph{vertex cover number} is the size of a minimum vertex cover.
	Given a parameterized problem instance $(I,k)$, a \emph{polynomial time and parameter transformation} yields an equivalent instance~$(I',p(k))$ in time~$q(|I|)$ for polynomials~$p$ and~$q$~\cite{BTY11}.
	Since \clique parameterized by the vertex cover number does not admit a polynomial-size problem kernel~\cite{BJK14} and \clique as well as \DCE[e$^-$] are NP-complete, it then follows that also \DCE[e$^-$] does not admit a polynomial-size problem kernel with respect to~$(k,r)$~\cite[Theorem 2.15]{BJK14}.

	The details of the transformation are as follows.
	Let~$(G=(V,E),h)$ be the \clique instance and let~$X \subseteq V$ be a factor-2 approximation of a minimum vertex cover of~$G$ 
	(efficiently computable by finding a maximal matching).
	We assume without loss of generality that each vertex in~$G$ has degree at least~$h$.
	Note that any clique in~$G$ has size at most $|X|+1$ since $V\setminus X$ forms an independent set of which at most one vertex can be in a clique. 
	We can thus assume that~$h \le |X| + 1$.
	Two vertices~$v_1,v_2 \in V \setminus X$ are called \emph{twins} (with respect to~$X$) if they have the same neighbors in~$X$.
	A \emph{twin class} with respect to~$X$ is a maximal subset~$C \subseteq V \setminus X$ such that all pairs of vertices in~$C$ are twins. 
	Obviously, the twin classes provide a partition of~$V \setminus X$.
	Let~$C_1, \ldots, C_\ell$ be the twin classes with respect to~$X$. 
	It holds that $\ell \le \min\{2^{|X|},n\}$.
	The idea of the construction is to have $\ell$~disjoint copies of~$G[X]$, one together with a representative vertex of each twin class. 
	The degree lists and the budget of edge deletions are chosen such that if we delete any edge from one of the copied subgraphs, then it is only possible to delete all edges of a clique of size~$h$ in that subgraph.
	Moreover, the subgraphs are connected to a binary tree which serves as a ``selector'' forcing any solution to delete edges in exactly one of the subgraphs.

	The precise construction of the \DCE[e$^-$] instance~$(G' = (V', E'),k,r,\degList)$ works as follows:
	For~$i \in \{1, \ldots, \ell\}$, let~$v_i$ be an arbitrary vertex of the twin class~$C_i$.
	Furthermore, let~$G_i := G[X \cup \{v_i\}]$.
	Initialize~$G'$ as the disjoint union of the graphs~$G_i$, $1\le i \le \ell$.
	Observe that~$G'$ contains~$\ell$ copies of the vertex cover~$X$.
	Next, add a binary tree of height~$t:=\lceil\log \ell \rceil\le |X|$ with leaves~$u_1, \ldots, u_{2^t}$ to~$G'$.
	For each~$i \in \{1,\ldots,\ell\}$, make~$u_i$ adjacent to all vertices in~$G_i$.
        If $2^t >\ell$, then, for each $i\in\{\ell+1,\ldots,2^t\}$, add another copy of $G_1$ and make~$u_i$ adjacent
        to all vertices in this new copy.
	This completes the construction of~$G'$; see \cref{fig:no-poly-construction} for an example.
	\begin{figure}[t]
		\centering
		\def\layersep{1}
		\begin{tikzpicture}[draw=black!30, node distance=\layersep]
				\tikzstyle{knoten}=[circle,draw,fill=white,minimum size=15pt,inner sep=0pt]
				\tikzstyle{dummy}=[circle,draw,fill=white,minimum size=5pt,inner sep=0pt]
				
				\foreach \i in {1,2,3} {
					\node[knoten] (VC-\i) at (0,\i) {$v_\i$};
				}
				\node (VC) at (0,5) {VC};
				\node (IS) at (2,5) {IS};

				\foreach \i in {1,...,5} {
					\node[knoten] (IS-\i) at (2,\i-1) {$w_\i$};
				}

				\path (VC-2) edge[-] (VC-3);
				\foreach \i / \j in {1/1, 1/2, 1/3, 2/2, 2/3, 2/4, 2/5, 3/3, 3/4, 3/5} {
					\path (VC-\i) edge[-] (IS-\j);
				}

				\begin{pgfonlayer}{background}
					\path (IS.north -| VC-1.west)+(-0.25,0.25) node (a) {};
					\path (IS-1.south -| VC-1.east)+(+0.25,-0.25) node (b) {};
					\path[fill=gray!10,rounded corners, draw=black!50, dashed]
							(a) rectangle (b);

					\path (IS.north -| IS-1.west)+(-0.25,0.25) node (a) {};
					\path (IS-1.south -| IS-1.east)+(+0.25,-0.25) node (b) {};
					\path[fill=gray!10,rounded corners, draw=black!50, dashed]
							(a) rectangle (b);

					\foreach \i / \j in {1/1, 2/2, 3/3, 4/5} {
						\path (IS-\j.north -| IS-1.west)+(-0.1,0.1) node (a) {};
						\path (IS-\i.south -| IS-1.east)+(+0.1,-0.1) node (b) {};
						\path[fill=gray!4,rounded corners, draw=black!50, dashed]
								(a) rectangle (b);
					}
				\end{pgfonlayer}

				\begin{scope}[xshift=7cm,yshift=4cm]
					\tikzstyle{neuron}=[draw=black, fill=white, circle, inner sep=2pt]
					\tikzstyle{thickedge}=[draw=black, line width=1pt]

					\node () at (0,1) {constructed graph};
					\node[label=above:{$v$}, neuron] (v) at (0,0) {};
					\node[neuron] (v1) at (-2,-\layersep) {} edge (v);
					\node[neuron,label=above:{$u_1$}] (u1) at (-3,-2*\layersep) {} edge (v1);
					\node[neuron,label=above:{$u_2$}] (u2) at (-1,-2*\layersep) {} edge (v1);

					\node[neuron] (v2) at (2,-\layersep) {} edge (v);
					\node[neuron,label=above:{$u_3$}] (u3) at (1,-2*\layersep) {} edge (v2);
					\node[neuron,label=above:{$u_4$}] (u4) at (3,-2*\layersep) {} edge (v2);

					\foreach \i in {1,...,4} {
						\foreach \j in {1,...,3} {
							\node[neuron] (G\i\j) at (2*\i + 0.5*\j - 6,-2.9*\layersep) {};
							\draw (G\i\j) -- (u\i);
						}
						\node (G\i) at (2*\i - 5,-3.5*\layersep) {$G_\i$};
						\draw (G\i2) -- (G\i3);
					}
					\node[neuron] (C1) at (-3.5,-3.6*\layersep) {} edge (u1) edge (G11);
					\node[neuron] (C2) at (-1.4,-3.6*\layersep) {} edge (u2) edge (G21) edge (G22);
					\node[neuron] (C3) at (0.6,-3.6*\layersep) {} edge (u3) edge (G31) edge (G32) edge (G33);
					\node[neuron] (C4) at (3.4,-3.6*\layersep) {} edge (u4) edge (G42) edge (G43);
					
					\node[below=of G1,yshift=0.8cm] () {$C_1 = \{w_1\}$};
					\node[below=of G2,yshift=0.8cm] () {$C_2 = \{w_2\}$};
					\node[below=of G3,yshift=0.8cm] () {$C_3 = \{w_3\}$};
					\node[below=of G4,xshift=0.2cm,yshift=0.8cm] () {$C_4 = \{w_4,w_5\}$};

					\begin{pgfonlayer}{background}
						\foreach \i in {1,...,4} {
							\path (G\i1.north -| G\i1.west)+(-0.1,0.1) node (a) {};
							\path (C1.south -| G\i3.east)+(+0.1,-0.1) node (b) {};
							\path[fill=gray!10,rounded corners, draw=black!50, dashed]
									(a) rectangle (b);
						}
					\end{pgfonlayer}
				\end{scope}
		\end{tikzpicture}
		\caption{An example of the construction. The given graph is displayed at the left side with highlighted vertex cover (VC), independent set (IS), and twin classes in the independent set. The constructed graph is depicted on the right side. The three vertices in the upper half of each subgraph~$G_i$, $1 \le i \le 4$, correspond to the vertex cover~$v_1, v_2, v_3$. The fourth vertex in the lower half of each subgraph corresponds to a vertex from the twin class~$C_i$.}
		\label{fig:no-poly-construction}
	\end{figure}
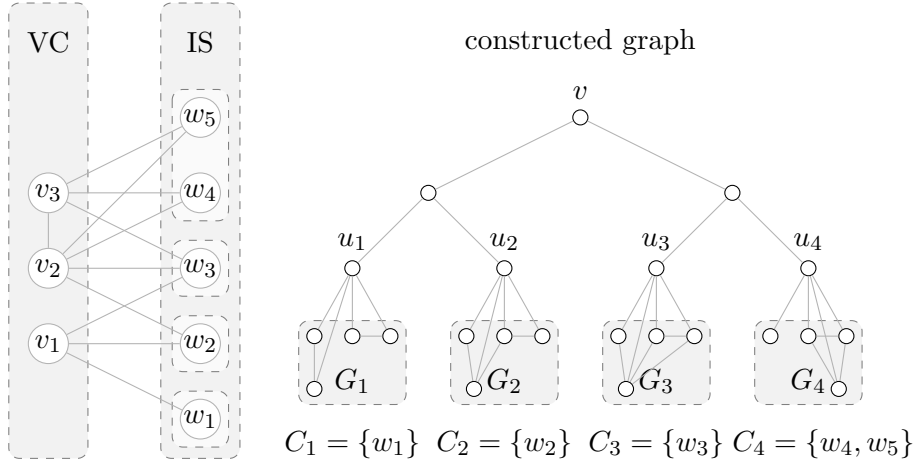
	Set~$k := \binom{h}{2} + h + t$.
	For each~$i \in \{1, \ldots, 2^t\}$ and each vertex~$v \in G_i$, set~$\degList(v) := \{\deg_{G'}(v), \deg_{G'}(v) - h\}$.
	For each leaf~$u_i$, set~$\degList(u_i) := \{\deg_{G'}(u_i), \deg_{G'}(u_i)-h-1\}$.
	For each inner vertex~$w$ in the binary tree, set~$\degList(w) := \{3, 1\}$.
	Finally, for the root~$v$ of the binary tree, set~$\degList(v) := \{1\}$.
	Observe that~$r := \max_{u \in V'}\max\degList(u) \le |X| + 2$ and~$k \in O(|X|^2)$. 
	Moreover, the above construction can be done in polynomial time.

	It remains to show the correctness of our construction, that is, $I := (G,h)$ is a yes-instance of \clique if and only if~$I' := (G',k,r,\degList)$ is a yes-instance of \DCE[e$^-$].
        
	``$\Rightarrow:$'' 
	Let~$(G,h)$ be a yes-instance, that is, $G$ contains a clique~$C$ of size~$h$. 
	If~$C$ contains a vertex from~$V \setminus X$, then let~$C_i$ denote its twin class (recall that there is at most one such vertex in~$C$). 
	Otherwise, set $i:=1$. 
	Now, let~$E'' \subseteq E'$ contain all edges between copies of the clique vertices in~$G_i$. 
	Moreover, let~$E''$ contain all edges between~$u_i$ and the clique vertices in~$G_i$ and all edges along the path from~$u_i$ to the root~$v$. 
	The overall number of edges in~$E''$ is at most~$k$ and it can easily be verified that, for each $v \in V'$, it holds~$\deg_{G'-E''}(v) \in \degList(v)$. 
	Thus,~$(G',k,r,\degList)$ is a yes-instance.

	``$\Leftarrow:$'' 
	Let~$(G',k,r,\degList)$ be a yes-instance. 
	Note that since~$\deg_{G'}(v) = 2$ and~$\degList(v) = \{1\}$, one of the two edges incident to~$v$ has to be deleted. 
	Moreover, for all inner nodes~$w$ of the binary tree, we have~$\degList(w)=\{3,1\}$; this ensures that any solution deletes either zero or two edges incident to~$w$. 
	Consequently, every solution deletes at least all edges on one particular path from the root~$v$ to some leaf~$u_i$. 
	This requires~$t$ edge deletions. 
	Now consider the leaves of the binary tree. 
	Their degree constraints are chosen in such a way that any solution either deletes no edges or exactly~$h+1$ edges incident to a leaf vertex. 
	Thus, for the leaf~$u_i$ with one removed incident edge, it holds that~$h$ further edges from~$u_i$ to~$G_i$ are deleted in a solution. 
	Finally, after this minimum number of $t+h$ edge deletions, we are left with a budget of~$\binom{h}{2}$ edge deletions in order to decrease the degrees of all~$h$ affected vertices in~$G_i$ by~$h-1$. 
	This is possible with exactly $\binom{h}{2}$ edge deletions if and only if they form a clique in~$G_i$, which in turn corresponds to a clique in~$G$.
\end{proof}
\begin{theorem}\label{thm:no-poly-kernel-DCEv}
	\DCE[v$^-$] does not admit a polynomial-size problem kernel with respect to~$(k,r)$ unless \NoKernelAssume.
\end{theorem}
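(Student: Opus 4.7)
The plan is to mirror the proof of \Cref{thm:no-poly-kernel-DCEe} by providing a polynomial-time and parameter transformation from \clique parameterized by the vertex cover number, which by \cite{BJK14} admits no polynomial-size problem kernel unless \NoKernelAssume. The twin-class bookkeeping, the $\ell$ copies $G_i:=G[X\cup\{v_i\}]$, and the binary-tree selector from that proof are problem-independent; the only part specific to \DCE[v$^-$] is to translate the ``edge-deletion logic'' used inside each $G_i$ and in the selector into ``vertex-deletion logic''.

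A clean way to achieve this is the \emph{subdivision trick}: starting from the graph $G'$ built in \Cref{thm:no-poly-kernel-DCEe}, subdivide every edge $e=\{u,v\}$ by inserting a fresh degree-two vertex $s_e$ adjacent to $u$ and $v$, and set $\degList(s_e):=\{2\}$. Deleting the vertex $s_e$ then has exactly the same effect on $\deg(u)$ and $\deg(v)$ as deleting the edge~$e$ had in the original reduction, and the constraint $\degList(s_e)=\{2\}$ is vacuously satisfied whenever $s_e$ is removed from the graph. With the degree lists of the original vertices kept (up to shifts accounting for newly attached auxiliary vertices) as in \Cref{thm:no-poly-kernel-DCEe}, every feasible edge-deletion solution of the old instance translates one-for-one into a feasible vertex-deletion solution of the new one that uses the same number of deletions.

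To prevent a solution from ``cheating'' by deleting an original vertex, I would attach to each original vertex $u$ a bundle of $k+1$ pendants, each with $\degList=\{1\}$, and correspondingly adjust $\degList(u)$. Deleting $u$ would drop every such pendant to degree $0$, forcing $k+1$ additional deletions and blowing the budget; so any feasible solution keeps all original vertices and removes only subdivision vertices, after which the equivalence argument is identical to that in \Cref{thm:no-poly-kernel-DCEe}. The budget remains $k\in O(t^2)$ with $t:=|X|$ and the maximum degree grows only by the added $O(k)$ pendants, so the parameter pair $(k,r)$ stays polynomially bounded in the vertex cover number of the input \clique instance.

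The main obstacle is calibrating the interaction between the three kinds of subdivisions (inside each $G_i$, along the edges of the selector tree, and between the leaves $u_i$ and the copies~$G_i$), the pendant gadget, and the lists $\{3,1\}$ carried by the selector's internal nodes in the edge-deletion proof. One must re-verify that the unique feasible vertex-deletion strategy still consists of (i) deleting one root-to-leaf path of subdivision vertices in the selector, (ii) deleting $h+1$ subdivision vertices incident to the chosen leaf~$u_i$, and (iii) deleting the $\binom{h}{2}$ subdivision vertices that correspond to an $h$-clique inside exactly one~$G_i$. This bookkeeping is concrete but routine, and once it is in place the no-polynomial-kernel lower bound of \cite{BJK14} transfers directly to \DCE[v$^-$] parameterized by~$(k,r)$.
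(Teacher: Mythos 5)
There is a genuine gap: your pendant gadget is unsound. A pendant $p$ attached to an original vertex $u$ has $\degList(p)=\{1\}$ and is pairwise non-adjacent to the other pendants of~$u$, so deleting $p$ costs one operation, lowers $\deg(u)$ by one, and unsatisfies nothing else (deleted vertices need not meet their constraints, and the remaining pendants keep degree~$1$). In other words, pendants let a solution simulate ``one-sided edge deletions,'' which destroys exactly the forcing structure that the construction of \cref{thm:no-poly-kernel-DCEe} relies on. Concretely, in your instance the only initially unsatisfied vertex is the root of the selector tree: after subdividing and attaching $k+1$ pendants with the shifted list, it has degree $k+3$ and list $\{k+2\}$, so deleting a \emph{single} pendant of the root satisfies it while every other vertex remains satisfied. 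Hence every constructed instance is a yes-instance with a solution of size one, regardless of whether the \clique instance is positive, and the transformation does not preserve the answer. The flaw is structural rather than a matter of ``routine bookkeeping'': any guard vertices used to protect originals must themselves trigger an over-budget cascade when deleted; independent degree-one guards with list $\{1\}$ never do.

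For comparison, the paper does not subdivide edges at all. It rebuilds the gadgets natively for vertex deletion: in the selector tree the two children of each inner node are made adjacent and inner lists become $\{2,4\}$, so deleting a parent forces deleting exactly one child and the cascade again selects one root-to-leaf path; inside the chosen $G_i$ the solution deletes \emph{all but} $h$ vertices (budget $k=\lceil\log\ell\rceil+|X|+1-h$), and the surviving $h$ vertices must form a clique because their lists are $\{\deg(w),h\}$; finally, the auxiliary vertex $u_i'$ is pinned by an attached clique $C_i$ of $|X|^2>k$ vertices each with the singleton list $\{\deg(w)\}$, so deleting $u_i'$ or any $C_i$-vertex unsatisfies more than $k$ vertices at once. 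If you want to salvage the subdivision idea, you would need to replace your pendants by such mutually adjacent guard cliques with exact degree lists (so that no guard can be deleted and no original vertex's degree can be lowered ``for free''), and then redo the forcing analysis; as written, the argument fails.
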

\begin{proof}
	We adjust our construction from \cref{thm:no-poly-kernel-DCEe} as follows:
	In the binary tree connecting all subgraphs~$G_i$, make for each inner vertex~$w$ its two children adjacent.
	Furthermore, change the degree lists of all inner vertices from~$\{1,3\}$ to~$\{2,4\}$.
	The idea is that if a parent vertex is deleted, then one of its two children also has to be deleted in order to satisfy the remaining child vertex.
	To ensure that the root~$v$ with~$\deg(v)=2$ is deleted, set~$\degList(v) := \{3\}$.
	In this way, the selection of the subgraph~$G_i$ via the binary tree works as in the reduction for \DCE[e$^-$].
	As edges cannot be deleted any more, we also have to adjust our construction at the subgraphs~$G_i$.
	For each leaf~$u_i$ of the binary tree, remove the edges between~$u_i$ and the vertices in~$G_i$ and add a new vertex~$u_i'$ that is adjacent to~$u_i$ and all vertices in~$G_i$.
	Furthermore, add for each~$i \in \{1, \ldots, \ell\}$ a clique~$C_i$ with~$|X|^2$ vertices and make~$u_i'$ adjacent to all vertices in~$C_i$.
	For each vertex~$w \in C_i$, set~$\degList(w) := \{\deg(w)\}$.
	Furthermore, set~$\degList(u_i) := \{1,3\}$ and~$\degList(u_i') := \{\deg(u_i'), |X|^2 + h\}$.
	For each vertex~$w \in V(G_i)$, set~$\degList(w) := \{\deg(w), h\}$.
	Finally, set~$k := \lceil \log \ell \rceil + |X|+1 - h$.
	Observe that~$k \in O(|X|)$ and~$r \in O(|X|^2)$.
	This construction requires polynomial time.

	It remains to show the correctness of our construction, that is, $I := (G,h)$ is a yes-instance of \clique if and only if~$I' := (G',k,r,\degList)$ is a yes-instance of~\DCE[v$^-$].
	
	``$\Rightarrow:$'' Let~$C \subset V$ be a clique of size~$h$ in~$G$. 
	As~$C$ can contain at most one vertex from~$V \setminus X$, there is a subgraph~$G_i$ in~$G'$ such that the vertices corresponding to~$C$ are also contained in~$G_i$. 
	Now, remove all other vertices in~$G_i$ and all vertices on the shortest path from~$u_i$ to the root~$v$ of the binary tree.
	Overall, we removed at most~$|X|+1 - h + \lceil \log \ell \rceil = k$ vertices. 
	Furthermore, observe that in the remaining graph all vertices are satisfied, implying that~$I'$ is a yes-instance.
	
	``$\Leftarrow:$'' Assume that~$C \subseteq V'$ is a solution for $I'$, that is, each vertex in~$G'-C$ is satisfied and~$|C|\le k$.
	First, observe that the root~$v$ of the binary tree is contained in~$C$.
	We now show that we can assume that exactly one of the two children~$v_1, v_2$ of~$v$ is also contained in~$C$:
	Suppose that neither~$v_1$ nor~$v_2$ is contained in~$C$. 
	Hence, at least one child-vertex~$v_1'$ of~$v_1$ and at least one child-vertex~$v_2'$ of~$v_2$ are contained in~$C$ since otherwise~$v_1$ or~$v_2$ would not be satisfied in~$G-C$.
	Denote with~$v_1''$ the second child-vertex of~$v_1$.
	We create a solution~$C'$ for~$I'$ such that~$|C| \ge |C'|$ by setting~$C' := (C \cup \{v_1\}) \setminus \{v_2'\}$ and removing from~$C$ all vertices in the subtrees with root~$v_2$ or with root~$v_1''$, that is, all vertices that are in the same connected component with~$v_2$ or~$v_1''$ in~$G' - \{v,v_1,v_1'\}$.
	As every vertex except~$v$ is satisfied in~$G'$, $C$ is a solution for~$I'$, and~$v_2$ and~$v_1''$ are satisfied in~$G'-C'$, it follows that~$C'$ is also a solution for~$I'$.
	By iteratively applying this procedure to all inner vertices of the binary tree, we can assume that in this binary tree exactly the shortest path from~$v$ to one leaf, say~$u_i$, is contained in the solution~$C$.
	Since~$|C_i| > k$, $\degList(w) = \{\deg(w)\}$ for all~$w \in C_i$, and~$u_i'$ is adjacent to all vertices in~$C_i$, it follows that~$u_i' \notin C$.
	As~$u_i \in C$, this implies that all but~$h$ vertices in~$G_i$ are contained in~$C$.
	Since for each~$w$ of these~$h$ remaining vertices it holds $\degList(w) = \{\deg(w), h\}$, it follows that they form a clique of order~$h$.
	Thus, $I$ is a yes-instance.
\end{proof}
%
Having established computational lower bounds, we next show that in contrast to \DCE[e$^-$] and \DCE[v$^-$], \DCE[e$^+$] admits a polynomial kernel with respect to~$(k,r)$.

\subsection{A Polynomial Kernel for \DCE[e$^+$] with Respect to $(k,r)$}
\label{sec:polyKernel-rk}
In order to describe the kernelization for \DCE[e$^+$], we need some further notation: 
For~$i \in \{0,\dotsc,r\}$, a vertex~$v \in V$ is of \emph{type}~$i$ if and only if~$\deg(v) + i \in \degList(v)$, that is, $v$ can be satisfied by adding~$i$ edges to it.
The set of all vertices of type~$i$ is denoted by~$T_i$.
Observe that a vertex can be of multiple types, implying that for~$i\neq j$ the vertex sets~$T_i$ and~$T_j$ are not necessarily disjoint. 
Furthermore, note that the type-0 vertices are exactly the satisfied ones.
We remark that there are instances for \DCE[e$^+$] where we might have to add edges between two satisfied vertices (though this may seem counter-intuitive):
Consider, for example, a three-vertex graph without any edges, the degree list function values are~$\{2\}, \{0,2\}, \{0,2\}$, and~$k=3$. 
The two vertices with degree list~$\{0,2\}$ are satisfied. 
However, the only solution for this instance is to add \emph{all} edges.

Now, we describe our kernelization algorithm:
The basic strategy is to keep the unsatisfied vertices~$\mathcal{U}$ and ``enough'' arbitrary vertices of each type (from the satisfied vertices) and delete all other vertices. 
The idea behind the correctness is that the vertices in a solution are somehow ``interchangeable''.
If an unsatisfied vertex needs an edge to a satisfied vertex of type~$i$, then it is not important which satisfied type-$i$ vertex is used. 
We only have to take care not to ``reuse'' the satisfied vertices to
avoid the creation of multiple edges.

Next, we specify what we mean by ``enough'' vertices: 
The ``magic number'' is $\alpha := k(\Delta_G + 2)$.
This leads to the definition of \emph{\typeSet{s}}: 

\begin{definition}\label{def:alpha-type-set}
	An \typeSet~$C\subseteq V$ is a vertex subset containing all unsatisfied vertices~$\mathcal{U}$ and~$\min\{\alpha,|T_i \setminus \mathcal{U}|\}$ type-$i$ vertices from~$T_i \setminus \mathcal{U}$ for each~$i \in \{1,\ldots,r\}$.
\end{definition}

We will soon show that for any fixed \typeSet~$C$, deleting all
vertices in~$V\setminus C$ results in an equivalent instance.
However, deleting a vertex changes the degrees of its neighbors.
Thus, we also have to adjust their degree lists.
Formally, for a vertex subset~$V' \subseteq V$, we define~$\degList_{V'}\colon (V \setminus V') \to 2^{\{0,\dotsc,r\}}$, where for each~$u \in V \setminus V'$, we set
$$\degList_{V'}(u) := \{ d \in \N  \mid d + |N_G(u) \cap V'| \in \degList(u)\}.$$
Then, \emph{safely removing} a vertex set~$V' \subseteq V$ from the instance~$(G,k,r,\degList)$ means to replace the instance with~$(G - V',k,r,\degList_{V'})$, see \cref{fig:SafelyRemoveExample} for an example.
\begin{figure}[t]
	\begin{tikzpicture}[draw=black!80]
			\tikzstyle{knoten}=[circle,draw,minimum size=14pt,inner sep=2pt]

			\node[knoten,label=below:{$\{1,2\}$}] (V1) at (0,0) {$u$};
			\node[knoten,label=below:{$\{3\}$}] (V2) at (2,0) {$v$};
			\node[knoten,label=above:{$\{3\}$}] (V3) at (2,1) {$w$};
			\node[knoten,label=below:{$\{2\}$}] (V4) at (4,0) {$x$};

			\path (V1) edge[-] (V2);
			\path (V3) edge[-] (V4);
			\path (V2) edge[-] (V4);
			\path (V2) edge[-] (V3);

			\draw [-to,thick,snake=snake,segment amplitude=.4mm,segment length=2mm,line after snake=1mm] (5 , 0.5) -- (7 , 0.5)
				node [above=1mm,midway,text width=3cm,text centered] {safely remove~$\{x\}$};

			\node[knoten,label=below:{$\{1,2\}$}] (W1) at (8,0) {$u$};
			\node[knoten,label=below:{$\{2\}$}] (W2) at (10,0) {$v$};
			\node[knoten,label=above:{$\{2\}$}] (W3) at (10,1) {$w$};

			\path (W1) edge[-] (W2);
			\path (W2) edge[-] (W3);
	\end{tikzpicture}
	\caption{An example for safely removing a vertex from a graph. The sets next to the vertices denote the degree lists defined by the degree list function~$\degList$. Observe that in both graphs the vertex~$u$ is of type zero and of type one, the vertex~$v$ is of type zero, and the vertex~$w$ is of type one.}
	\label{fig:SafelyRemoveExample}
\end{figure}
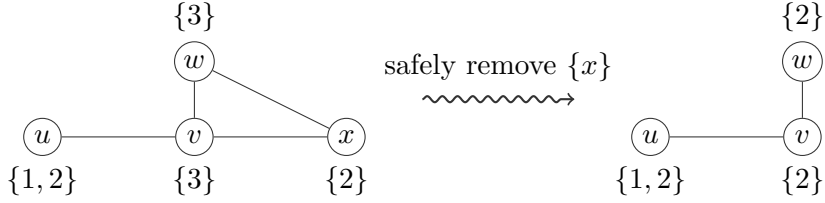
With these definitions we can provide our data reduction rules leading to a polynomial-size problem kernel.

\begin{rrule}\label{rr:removeNonCoreVertex}
	Let~$(G=(V,E),k,r,\degList)$ be an instance of \DCE[e$^+$] and let $C \subseteq V$ be an \typeSet in~$G$.
	Then, safely remove all vertices in~$V \setminus C$.
\end{rrule}

We next show that we can apply \cref{rr:removeNonCoreVertex} in linear time.
Note that in our setting linear time does \emph{not} mean~$O(n+m)$ as it is usually the case with graph problems.
The reason is that the degree list function~$\degList$ may contain up to~$r$ possible degrees for each vertex; this gives up to~$rn \notin O(n+m)$ possible degrees overall.
Therefore, linear time means in our setting~$O(m + |\degList|)$ time, where~$|\degList| \ge n$ denotes the encoding size of~$\degList$.
In the following we will assume that the encoding of~$\degList$ requires at least one bit per possible degree and thus~$|\degList| \ge n$.

\begin{lemma} \label{lem:rruleSafe}
	\cref{rr:removeNonCoreVertex} is correct and can be applied in linear time.
\end{lemma}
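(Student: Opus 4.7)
The plan is to establish the two directions of equivalence separately and then bound the runtime. Let $V':=V\setminus C$ and write the reduced instance as $(G-V',k,r,\degList_{V'})$. The easy direction is ``$\Leftarrow$'': given a solution $E'$ of size at most $k$ for the reduced instance, interpret $E'$ as a set of edges in $G$. Every $v\in V'$ is satisfied in $G$ (because $U\subseteq C$) and receives no new edge in $E'$, so $\deg_{G+E'}(v)=\deg_G(v)\in\degList(v)$. For $u\in C$ we unfold the definition of $\degList_{V'}$: the condition $\deg_{(G-V')+E'}(u)\in\degList_{V'}(u)$ is equivalent to $\deg_G(u)+|\{e\in E':u\in e\}|\in\degList(u)$, i.e.\ $\deg_{G+E'}(u)\in\degList(u)$. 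Hence $E'$ is also a solution for $(G,k,r,\degList)$.

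The core of the argument is the direction ``$\Rightarrow$''. Assume $E^*$ is a solution of size at most $k$ for the original instance and let $a_v:=|\{e\in E^*:v\in e\}|$; then $v\in T_{a_v}$ for every $v\in V$, and $\sum_{v\in V'} a_v\le 2|E^*|\le 2k$. The plan is to pick, for each $v\in V'$ with $a_v\ge1$, an injective ``substitute'' $\phi(v)\in C\cap T_{a_v}$ and to define
\[
E':=\bigl\{\{\phi'(x),\phi'(y)\}:\{x,y\}\in E^*\bigr\},
\qquad
\phi'(z):=\begin{cases}\phi(z)&z\in V',\\ z&z\in C.\end{cases}
\]
I will process the vertices of $V'$ needing substitutes one by one. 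When choosing $\phi(v)$, the forbidden candidates in $C\cap T_{a_v}$ are: (i) the at most $2k$ endpoints of $E^*$ lying in $C$; (ii) substitutes already fixed in earlier steps (at most $2k$); (iii) for each of the $a_v$ partners $w$ of $v$ in $E^*$, the vertex $\phi'(w)$ itself and its $G$\nobreakdash-neighbors, in order to avoid loops and parallel edges. Using $\sum_{v\in V'}a_v\le 2k$ to amortize the ``partner'' forbidden counts across all rerouting steps, and the hypothesis $|C\cap T_{a_v}|\ge\alpha=k(\Delta_G+2)$, a greedy choice of $\phi(v)$ is always feasible; this is the step that drives the choice of the magic number~$\alpha$, and verifying this counting carefully is the main technical obstacle. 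Once $\phi$ is fixed, every $\phi(v)$ receives exactly $a_v$ new edges and hence lies in the prescribed $T_{a_v}$, so $\deg_{G+E'}(u)\in\degList(u)$ for all $u\in V$. Restricting to $C$ and applying the backward direction's unfolding gives a solution for the reduced instance of size $|E'|\le|E^*|\le k$.

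For the runtime, the $\alpha$\nobreakdash-type set $C$ can be built in one pass: for each vertex $v$, iterate over its degree list $\degList(v)$ and, for each $i\in\{1,\dots,r\}$ with $\deg_G(v)+i\in\degList(v)$, place $v$ in the bucket for type $i$ until that bucket reaches $\alpha$ elements, and always place $v$ in $C$ if $v\in U$. This costs $O(m+|\degList|)$. Safely removing $V\setminus C$ then amounts to iterating over each $v\in V\setminus C$ and, for every neighbor $u\in C$, shifting $\degList(u)$ down by one (computing $\degList_{V'}(u)=\{d:d+|N_G(u)\cap V'|\in\degList(u)\}$ incrementally); this is again $O(m+|\degList|)$, giving the claimed linear running time.
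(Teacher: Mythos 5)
Your overall strategy is the same as the paper's: keep the instance equivalent by rerouting every endpoint of a solution edge that lies outside the \typeSet~$C$ to a fresh vertex of the same type inside~$C$, with the bucket size $\alpha=k(\Delta_G+2)$ guaranteeing a usable substitute; your backward direction (unfolding $\degList_{V\setminus C}$) and the linear-time construction of~$C$ and of the adjusted degree lists also match the paper. The gap is precisely in the step you yourself flag as the main obstacle: the existence of a substitute at each step is not established, and as itemized it does not go through. Your forbidden set for choosing $\phi(v)$ has up to $2k$ vertices from (i), up to $2k$ from (ii), and up to $a_v(\Delta_G+1)\le k(\Delta_G+1)$ from (iii), i.e.\ up to $k\Delta_G+5k$ in total, which exceeds $\alpha=k\Delta_G+2k$; and ``amortizing the partner counts across rerouting steps'' cannot repair this, because each single step needs its own forbidden set to have size $<\alpha$ --- a bound on $\sum_v a_v$ says nothing about a worst step with $a_v=k$.

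The count does work, but only after refinements you have not made: (i) and (ii) together contain at most $2k$ vertices, since the substitutes fixed so far correspond injectively to $E^*$-endpoints in $V\setminus C$, so $|V(E^*)\cap C|$ plus the number of fixed substitutes is at most $|V(E^*)|\le 2k$; and in (iii) the vertices $\phi'(w)$ themselves are already counted there, so the partners contribute at most $a_v\Delta_G\le k\Delta_G$ additional forbidden vertices, for a total below $k(\Delta_G+2)$. This is exactly the bookkeeping the paper's proof makes automatic by replacing one bad endpoint at a time: after each replacement one again has a genuine solution $E'$ of the original instance, and the forbidden set is simply $V(E')\cup N_G(N_{G[E']}(v))$, of size less than $2k+k\Delta_G\le\alpha$. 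A further minor issue in your one-pass formulation is the ordering of constraints when both endpoints of an $E^*$-edge lie outside~$C$: the requirement that $\phi(v)$ avoid $N_G[\phi'(w)]$ must be deferred to whichever endpoint is processed second, which you do not spell out and which the iterative formulation avoids. So the plan is salvageable and essentially the paper's argument, but the decisive inequality that dictates the choice of~$\alpha$ is not proved as written.
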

\begin{proof}
	We first prove the correctness of \cref{rr:removeNonCoreVertex}. 
	To this end, the given \DCE[e$^+$] instance is denoted by~$I := (G=(V,E),k,r,\degList)$.
	We fix any \typeSet~$C \subseteq V$. 
	Furthermore, denote by~$I'$ the resulting instance when safely removing~$V \setminus C$, formally, $I' := (G[C], k, r, \degList_{V \setminus C})$.
	As all vertices in~$V \setminus C$ are satisfied, it follows that any edge set that is a solution for~$I'$ is also a solution for~$I$. 
	Hence, if~$I'$ is a yes-instance, then also~$I$ is a yes-instance. 
	To complete the correctness proof, it remains to prove the reverse direction.
	
	Let~$E' \subseteq \binom{V}{2} \setminus E$ be a solution for~$I$, that is, $\forall v \in V \colon \deg_{G+E'}(v) \in \degList(v)$.
	Observe that if~$V(E') \subseteq C$, then~$E'$ is also a solution for~$I'$.
	Hence, it remains to consider the case~$V(E') \setminus C \neq \emptyset$.
	Let~$v \in V(E') \setminus C$.
	We show how to construct from~$E'$ a solution~$E''$ for~$I$ such that~$(V(E') \setminus C) \setminus V(E'') = \{v\}$.
	Let~$i \le k$ denote the number of edges in~$E'$ with endpoint~$v$.
	Since~$v$ is not in the \typeSet~$C$, it follows that~$v \notin \mathcal{U}$ and~$|C \cap T_i| = \alpha = k(\Delta_G + 2)$.
	Next, we show that there is a type-$i$ vertex~$u\in C$ such that~$u \notin V(E')$ and~$u \notin N_G(N_{G[E']}(v))$, that is, $u$ is not incident to any edge in~$E'$ and also not adjacent to any vertex that is connected to~$v$ by an edge in~$E'$.
	Note that ``replacing''~$v$ by such a vertex~$u$ in the edge set~$E'$ yields~$E''$:
	Formally, for~$E'' := \{\{u,w\} \mid \{v,w\} \in E'\} \cup \{\{w_1,w_2\} \mid \{w_1,w_2\} \in E' \wedge w_1 \neq v \wedge w_2 \neq v\}$, it holds that~$E'' \cap E = \emptyset$ and since~$u$ is also of type~$i$, all degree constraints are satisfied in~$G+E''$.
	Hence, it remains to show that such a vertex~$u$ exists, that is, $(C \cap T_i) \setminus (V(E') \cup N_G(N_{G[E']}(v)))$ is indeed non-empty.
	This is true since~$|C \cap T_i| = k(\Delta_G +2)$, whereas $|V(E') \cup N_G(N_{G[E']}(v))| < 2k + k\Delta_G$.
	By iteratively applying this procedure, we obtain a solution for~$I'$. 
	Hence, $I'$~is a yes-instance if~$I$ is a yes-instance.
	This completes the correctness proof.

	To compute the \typeSet~$C$ in linear time, initialize~$C := \emptyset$ and $r$~counters~$c_1 := c_2 := \ldots := c_r := 0$ (one for each type).
	Then, for each vertex~$v$, compute the types of~$v$ in~$O(|\degList(v)|)$ time and let~$I \subseteq \{1, \ldots, r\}$ be the set of types of~$v$.
	If~$v$ is unsatisfied or if~$c_i \le \alpha$ for some~$i\in I$, then add~$v$ to~$C$.
	If~$v$ is satisfied, then increase~$c_i$ by one for each~$i \in I$.
	Now that we computed the vertices in~$C$ in linear time, it remains to compute their correct degree lists.
	To this end, for each vertex~$v \in C$, compute~$\gamma := \deg_G(v) - \deg_{G[C]}$ (doable in $O(\deg(v))$ time) and set~$\degList_{V \setminus C} (v) := \{d \ge 0 \mid d + \gamma \in \degList(v)\}$ in~$O(|\degList (v)|)$ time.
	Overall, we safely removed all vertices in~$V \setminus C$ in linear time. 
\end{proof}
As each \typeSet contains at most~$\alpha$ satisfied vertices of each vertex type, it follows that after one application of \cref{rr:removeNonCoreVertex} the graph contains at most~$|C| = |\mathcal{U}| + r \alpha$ vertices. 
The number of unsatisfied vertices in an \typeSet can always be upper-bounded by~$|\mathcal{U}|\le 2k$ since we can increase the degrees of at most~$2k$ vertices by adding~$k$ edges. 
If there are more than~$2k$ unsatisfied vertices, then we return a trivial no-instance.
Thus, we end up with $|C| \le 2k + rk(\Delta_G +2)$.
To obtain a polynomial-size problem kernel with respect to the combined parameter~$(k,r)$, we need to bound the maximum vertex degree~$\Delta_G$.
However, this can easily be achieved:
Since we only allow edge additions, for each vertex~$v \in V$, we have~$\deg(v) \le \max \degList(v) \le r$.
Formalized as a data reduction rule, this reads as follows:

\begin{rrule}\label{rr:maxDegree}
	Let~$(G=(V,E),k,r,\degList)$ be an instance of \DCE[e$^+$]. 
	If~$G$ contains more than~$2k$ unsatisfied vertices or if there exists a vertex~$v \in V$ with~$\deg(v) > \max \degList(v)$, then return a trivial no-instance.
\end{rrule}
%
Having applied \cref{rr:maxDegree} once, it holds that~$\Delta_G \le r$.
Thus, one execution of \cref{rr:removeNonCoreVertex} yields a graph containing at most~$2k+rk(r + 2)$ vertices.
\cref{lem:rruleSafe} ensures that we can apply \cref{rr:removeNonCoreVertex} in linear time. 
Clearly, \cref{rr:maxDegree} can be applied in linear time, too.
This leads to the following.

\begin{theorem}\label{thm:krpolykernelDCEa}
  \DCE[e$^+$] admits a problem kernel containing~$O(kr^2)$ vertices computable in linear time.
\end{theorem}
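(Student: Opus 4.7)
The plan is to compose the two reduction rules and then simply read off the size and running-time bounds. First I would invoke \cref{rr:maxDegree}. This either immediately returns a trivial no-instance, in which case we are done, or it guarantees two things: the number of unsatisfied vertices satisfies $|U|\le 2k$, and every vertex $v$ has $\deg_G(v)\le\max\degList(v)\le r$, so $\Delta_G\le r$. It is important to apply this rule first, because the quantity $\alpha=k(\Delta_G+2)$ underlying the definition of an $\alpha$-type set only becomes $O(kr)$ once $\Delta_G$ has been bounded by $r$.

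Next I would compute any $\alpha$-type set $C$ and apply \cref{rr:removeNonCoreVertex} once to safely remove $V\setminus C$. Correctness and linear-time executability are already established in \cref{lem:rruleSafe}. By the very definition of an $\alpha$-type set, $C$ contains all of $U$ together with at most $\alpha$ representatives for each of the $r$ possible types, so
\[
	|C|\;\le\;|U|+r\alpha\;\le\;2k+r\cdot k(r+2)\;\in\;O(kr^2),
\]
which is the claimed vertex count. For the running time, \cref{rr:maxDegree} clearly runs in linear time (a single sweep over the degree sequence and over $\degList$), and \cref{lem:rruleSafe} supplies the same guarantee for \cref{rr:removeNonCoreVertex}; since each rule is applied exactly once, the overall preprocessing is linear.

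There is no real obstacle left at this stage: the only delicate step was showing that $\alpha=k(\Delta_G+2)$ is large enough to reroute any solution edge incident to a vertex outside $C$ to an unused type-equivalent vertex inside $C$, avoiding collisions with the at most $2k+k\Delta_G$ vertices already touched by the partial solution, and this was handled inside \cref{lem:rruleSafe}. Given that lemma together with the degree bound $\Delta_G\le r$ enforced by \cref{rr:maxDegree}, the theorem is just the bookkeeping displayed above.
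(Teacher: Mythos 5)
Your proposal is correct and follows essentially the same route as the paper: apply \cref{rr:maxDegree} to enforce $|U|\le 2k$ and $\Delta_G\le r$, then apply \cref{rr:removeNonCoreVertex} once, invoking \cref{lem:rruleSafe} for correctness and linear time, which yields $|C|\le 2k+rk(r+2)\in O(kr^2)$. Your explicit remark about applying \cref{rr:maxDegree} first so that $\alpha=k(\Delta_G+2)$ becomes $O(kr)$ is a correct clarification of an ordering the paper leaves implicit, but it is not a different argument.
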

Having computed the problem kernel due to \cref{thm:krpolykernelDCEa}, one can solve \DCE[e$^+$] by simply trying all possibilities to add at most~$k$ edges with endpoints in the remaining~$O(k r^2)$ vertices.
This gives the following.
\begin{corollary}\label{cor:DCEe+fpt-rk}
	\DCE[e$^+$] can be solved in~$(kr)^{O(k)} + O(m + |\degList|)$ time.
\end{corollary}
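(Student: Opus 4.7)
The plan is to combine the kernelization result of \cref{thm:krpolykernelDCEa} with a brute-force search on the reduced instance. First I would apply \cref{rr:maxDegree} and \cref{rr:removeNonCoreVertex} once to the input instance; by \cref{lem:rruleSafe} and the trivial linear-time applicability of \cref{rr:maxDegree}, this preprocessing takes $O(m + |\degList|)$ time and, as argued just before \cref{thm:krpolykernelDCEa}, produces an equivalent instance on a vertex set $C$ of size at most $2k + rk(r+2) \in O(kr^2)$.

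On the kernelized instance I would then enumerate all candidate solutions. The number of non-edges available for insertion is at most $\binom{|C|}{2} \in O(k^2 r^4)$, and a solution uses at most $k$ of them. The number of edge subsets of size at most $k$ is therefore bounded by $\sum_{i=0}^{k} \binom{O(k^2 r^4)}{i} \le (k^2 r^4 + 1)^k = (kr)^{O(k)}$. For each such candidate set $E'$, I would verify in time polynomial in $|C|$ (hence absorbed in $(kr)^{O(k)}$) whether $\deg_{G[C]+E'}(v) \in \degList_{V\setminus C}(v)$ holds for every $v \in C$. By the correctness of the reduction rules the original instance is a yes-instance if and only if some candidate passes this check, and the overall running time is $(kr)^{O(k)} + O(m + |\degList|)$, as claimed.

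There is essentially no real obstacle here: once \cref{thm:krpolykernelDCEa} is available, the argument is a standard ``kernel plus brute force'' combination. The only minor point that deserves care is bookkeeping in the exponent, namely verifying that $\binom{O(k^2r^4)}{k}$ collapses to $(kr)^{O(k)}$ and that the per-candidate verification fits into this bound; both are immediate from $\binom{N}{k} \le N^k$ and from the fact that checking degrees against the lists $\degList_{V\setminus C}$ takes time linear in the size of the kernel.
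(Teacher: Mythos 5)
Your proposal is correct and follows exactly the paper's intended argument: apply the linear-time kernelization of \cref{thm:krpolykernelDCEa} and then brute-force over all ways to add at most~$k$ edges among the $O(kr^2)$ remaining vertices, which gives the $(kr)^{O(k)}$ term. The exponent bookkeeping via $\binom{N}{k} \le N^k$ and the polynomial-time degree check per candidate are exactly the (implicit) details behind the paper's one-line derivation.
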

      
\subsection{A Polynomial Kernel for \DCE[e$^+$] with Respect to $r$}\label{sec:param-r}
In this subsection, by adapting among other things some ideas of~\citet{HNNS15}, we show how to upper-bound~$k$ by a polynomial in~$r$.
Combining this upper bound for~$k$ with \cref{thm:krpolykernelDCEa} results in a polynomial-size problem kernel for the single parameter~$r$.
The general strategy to obtain the upper bound is inspired by a heuristic of \citet{LT08} and will be as follows:
First, remove the graph structure and solve the problem on the degree sequence of the input graph by using dynamic programming.
The solution to this number problem will indicate the \emph{demand}
for each vertex, that is, the number of added edges incident to that vertex.
Then, using a result due to \citet{KT00}, we prove that either~$k \le \realizationSizeBound$ or we can find a set of edges satisfying the specified demands in polynomial time.

We start by formally defining the number problem and showing its polynomial-time solvability.
\probDefEnv{\defDecprob{Number Constraint Editing (\NCE)}
{A function~$\numList\colon \{1,\ldots,n\} \rightarrow 2^{\{0,\dotsc,r\}}$ and positive integers $d_1, \ldots, d_n, k, r$.}
{Are there~$n$ positive integers~$d_1', \ldots, d_n'$ such that~$\sum_{i=1}^{n} (d_i' - d_i) = k$ and for all~$i \in \{1,\ldots,n\}$ it holds that~$d_i' \ge d_i$ and~$d_i' \in \numList(i)$?}}

\begin{lemma} \label{lem:polyNCE}
	\NCE is solvable in~$O(n\cdot k\cdot r)$ time.
\end{lemma}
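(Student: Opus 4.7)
The plan is to solve \NCE by a straightforward dynamic programming over the index~$i$ and the ``used budget''~$j$, using the fact that for each index the number of admissible values is at most~$r+1$.

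Formally, I would define a Boolean table $T[i,j]$ for $i\in\{0,\dotsc,n\}$ and $j\in\{0,\dotsc,k\}$ with the intended meaning: $T[i,j]=\text{true}$ if and only if there exist integers $d_1',\dotsc,d_i'$ with $d_\ell'\ge d_\ell$ and $d_\ell'\in\numList(\ell)$ for all $\ell\le i$ such that $\sum_{\ell=1}^{i}(d_\ell'-d_\ell)=j$. The base case is $T[0,0]=\text{true}$ and $T[0,j]=\text{false}$ for $j\ge 1$, and the recurrence is
\[
T[i,j] \;=\; \bigvee_{\substack{d\in\numList(i)\\ d\ge d_i,\; d-d_i\le j}} T[i-1,\,j-(d-d_i)].
\]
Correctness of this recurrence is immediate from the definition of $T$: the outer disjunction enumerates all valid choices for $d_i'$, and $T[i-1,\,j-(d_i'-d_i)]$ captures whether the remaining budget can be realized on the first $i-1$ indices. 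The given instance of \NCE is a yes-instance if and only if $T[n,k]=\text{true}$, and a witness $(d_1',\dotsc,d_n')$ can be reconstructed by standard back-pointers from $T[n,k]$ down to $T[0,0]$.

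For the running time, observe that the table has $(n+1)(k+1)\in O(nk)$ entries. To fill entry $T[i,j]$, one iterates over the values $d\in\numList(i)$ with $d\ge d_i$; since $\numList(i)\subseteq\{0,\dotsc,r\}$, there are at most $r+1$ such values, so each entry is computed in $O(r)$ time. Summing over all entries yields the claimed $O(n\cdot k\cdot r)$ bound. There is no real obstacle here: the only point worth checking is that the bound on the local branching factor is~$r+1$ rather than something larger, which follows directly from $\numList(i)\subseteq 2^{\{0,\dotsc,r\}}$.
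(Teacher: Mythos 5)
Your proposal is correct and follows essentially the same route as the paper: a Boolean table indexed by prefix length and used budget, a recurrence branching over the at most $r+1$ admissible values in $\numList(i)$, and the resulting $O(n\cdot k\cdot r)$ bound. The only (immaterial) difference is that you anchor the recursion at $T[0,0]$ while the paper starts at $i=1$ with $T[1,j]$ true iff $d_1+j\in\numList(1)$.
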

\begin{proof}
	We provide a simple dynamic programming algorithm for \NCE.
	To this end, we define a two-dimensional table~$T$ as follows: 
	For~$i \in \{1,\ldots,n\}$ and~$j \in \{1,\ldots,k\}$, the entry~$T[i,j]$ is \texttt{true} if and only if the instance~$(d_1, \ldots, d_i, j, r, \numList)$ is a yes-instance.
	Hence, $T[n,k]$ stores the answer to the instance $(d_1,\ldots,d_n,k,r,\numList)$.

	In order to compute~$T$, we use the following recurrence: 
	\begin{align}
	(T[i,j] = \texttt{true}) \iff (\exists x \in \numList(i)\colon x \ge d_i \wedge T[i-1,j-(x-d_i)] = \texttt{true}), \label{eq:dpRecursion}
	\end{align}
	where we set
	$$T[1,j] := \begin{cases}
	            	\texttt{true,}  & \text{if } d_1 + j \in \numList(1), \\
	            	\texttt{false,} & \text{else.}
	            \end{cases}$$
	The correctness follows from the fact that at position~$i$ all possibilities for~$d_i'$ are considered.
	Also the running time is not hard to see: 
	There are~$n \cdot k$ entries and the computation of one entry requires to check at most~$r$ possibilities for the value~$x$ in Equivalence~(\ref{eq:dpRecursion}).
	As each check is doable in~$O(1)$ time, the overall running time sums up to~$O(n\cdot k\cdot r)$.
\end{proof}
\cref{lem:polyNCE} can be proved with a dynamic program that specifies
the demand for each vertex, that is, the number of added edges
incident to each vertex.
Given these demands, the remaining problem is to decide whether there exists a set of edges that satisfy these demands and are not contained in the input graph~$G$.
This problem is closely related to the polynomial-time solvable \fFactor problem~\cite[Chapter 10]{LP86}, a special case of \DCE[e$^-$] where~$|\degList(v)| = 1$ for all~$v\in V$; it is formally defined as follows:
\probDefEnv{\defDecprob{\fFactor}
{A graph~$G=(V,E)$ and a function~$f\colon V \rightarrow \N_0$.}
{Is there an \emph{\ffactor{}}, that is, a subgraph~$G' = (V, E')$ of~$G$ such that~$\deg_{G'}(v) = f(v)$ for all~$v \in V$?}}
Observe that our problem of satisfying the demands of the vertices
in~$G$ is essentially the question whether there is an \ffactor in the
complement graph~$\overline{G}$ where the function~$f$ stores the
demand of each vertex.
Having formulated our problem as \fFactor, we use the following result about the existence of an \ffactor.

\begin{lemma}[\citet{KT00}] \label{thm:generalfFactorDegreeCondition}
	Let $G = (V,E)$ be a graph with minimum vertex degree~$\delta_G$ and let $a \le b$ be two positive integers. 
	Suppose further that $$\delta_G \ge \frac{b}{a+b} |V|\text{ and } |V| > \frac{a+b}{a}(b+a-3).$$
	Then, for any function $f\colon V\rightarrow \{a, a + 1, . . . , b\}$ where $\sum_{v \in V} f(v)$ is even, $G$~has an \ffactor.
\end{lemma}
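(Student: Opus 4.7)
The plan is to invoke the classical Tutte $f$-factor theorem, which states that $G$ admits an $f$-factor if and only if for every pair of disjoint subsets $S,T\subseteq V$ the deficiency
\[
\eta(S,T)\;=\;\sum_{v\in S}f(v)\;-\;\sum_{v\in T}f(v)\;+\;\sum_{v\in T}\deg_{G-S}(v)\;-\;q(S,T)
\]
is nonnegative, where $q(S,T)$ counts the \emph{odd components} of $G-(S\cup T)$, i.e.\ those components $C$ for which $\sum_{v\in V(C)}f(v)+e_G(V(C),T)$ is odd. Since $\sum_{v\in V}f(v)$ is assumed even, a standard parity check shows that $\eta(S,T)$ has the same parity as that sum, so a violation in fact forces $\eta(S,T)\le -2$.

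Assuming for contradiction that no $f$-factor exists, I would pick a violating pair $(S,T)$ and plug in the hypotheses $a\le f(v)\le b$ to bound $\sum_{v\in S}f(v)\le b|S|$ and $\sum_{v\in T}f(v)\ge a|T|$. The degree term is controlled by $\deg_{G-S}(v)\ge \delta_G-|S|$ for each $v\in T$, yielding $\sum_{v\in T}\deg_{G-S}(v)\ge(\delta_G-|S|)|T|$. Substituting into $\eta(S,T)\le -2$ gives
\[
b|S|\;+\;(\delta_G-|S|-a)|T|\;+\;2\;\le\;q(S,T).
\]
Using the hypothesis $\delta_G\ge\frac{b}{a+b}|V|$ to lower-bound the left-hand side, and the trivial bound $q(S,T)\le |V|-|S|-|T|$ on the right, I would rearrange into a purely arithmetic inequality relating $|V|,|S|,|T|,a,b$ and then minimise over admissible $(|S|,|T|)$ to obtain $|V|\le\frac{a+b}{a}(a+b-3)$, contradicting the second hypothesis.

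The main obstacle is that the naive bound $q(S,T)\le|V|-|S|-|T|$ is too weak, since odd components that are singleton vertices of $V\setminus(S\cup T)$ contribute once to $q(S,T)$ but their degree (which is at least $\delta_G$ in $G$) pours many edges into $S\cup T$, creating a subtle double-counting tension. I would therefore split $q(S,T)$ into singleton and non-singleton odd components, bound the number of singletons by $\frac{|S||V\setminus(S\cup T)|}{\delta_G-|T|}$ via edge counting between $S$ and the singletons, and bound the remaining odd components by $\tfrac12(|V|-|S|-|T|-\#\text{singletons})$. The parity condition on $\sum f(v)$ then supplies the crucial $-2$ rather than $-1$ slack that makes the final case analysis (which is tightest when $|S|$ is around $\frac{a}{a+b}|V|$ and $T=\emptyset$) close with exactly the threshold $\frac{a+b}{a}(a+b-3)$.
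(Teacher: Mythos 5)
First, a point of comparison: the paper does not prove this lemma at all---it is imported verbatim from Katerinis and Tsikopoulos~\cite{KT00} and used as a black box---so your attempt is necessarily a reproof of the cited result, and the only question is whether it is correct. Your toolkit is the right one (Tutte's $f$-factor theorem plus the standard parity fact that the deficiency is congruent to $\sum_{v\in V}f(v)$ modulo $2$, which legitimately upgrades a violation to $\eta(S,T)\le -2$), but your first quantitative step goes in the wrong direction. From $\eta(S,T)\le -2$ you must bound $\sum_{v\in S}f(v)-\sum_{v\in T}f(v)+\sum_{v\in T}\deg_{G-S}(v)+2$ from \emph{below}; with $f(v)\in\{a,\dots,b\}$ this means $\sum_{v\in S}f(v)\ge a|S|$ and $\sum_{v\in T}f(v)\le b|T|$, giving
\[
q(S,T)\;\ge\;a|S|+(\delta_G-|S|-b)|T|+2 .
\]
Your displayed inequality $b|S|+(\delta_G-|S|-a)|T|+2\le q(S,T)$ uses the two bounds with $a$ and $b$ interchanged; since $a\le b$ it is in general strictly stronger than what the violation yields and does not follow. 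Every estimate downstream is therefore calibrated to the wrong coefficients.

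Second, the part that actually carries the theorem---closing the arithmetic at exactly $|V|\le\frac{a+b}{a}(a+b-3)$---is announced rather than carried out. ``Minimise over admissible $(|S|,|T|)$'' conceals the real case analysis ($T=\emptyset$ versus $T\neq\emptyset$, $q(S,T)=0$ versus $q(S,T)\ge 1$, $|S|$ below or above $\delta_G-b$, and the regime $|T|\ge\delta_G$, where the denominator $\delta_G-|T|$ in your singleton bound $|S|\,|V\setminus(S\cup T)|/(\delta_G-|T|)$ is nonpositive), and you give no evidence that your split of $q(S,T)$ into singleton and non-singleton odd components reaches the stated threshold. Moreover, the regime you single out as tight, $T=\emptyset$ with $|S|\approx\frac{a}{a+b}|V|$, cannot be extremal, at least when $a<b$: with $T=\emptyset$ the corrected inequality demands $a|S|+2\le q(S,T)$, while every component of $G-S$ has at least $\delta_G+1-|S|\ge\frac{b-a}{a+b}|V|+1$ vertices, so $q(S,T)<\frac{a+b}{b-a}$ is bounded by a constant although $a|S|+2$ grows linearly in $|V|$; that case is refuted immediately and is far from where the hypothesis $|V|>\frac{a+b}{a}(a+b-3)$ is needed. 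As it stands, the proposal is a plausible plan with a direction-of-inequality error and the decisive computation missing; it does not yet establish the lemma (which the paper, for its part, is content to cite).
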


As we are interested in an \ffactor of the complement graph~$\overline{G}$ of our input graph~$G$ and the demand for each vertex is at most~$r$, we use \cref{thm:generalfFactorDegreeCondition} with $\delta_{\overline{G}}\ge n-r-1$, $a=1$, and $b = r$ yielding the following.

\begin{lemma} \label{cor:fFactorDegreeCondition}
	Let~$G = (V, E)$ be a graph with $n$ vertices, $\delta_G \ge n - r - 1$, $r \ge 1$, and let~$f\colon V\rightarrow\{1, \ldots, r\}$ be a function such that~$\sum_{v \in V} f(v)$ is even.
	If~$n \ge \factorSizeBound$, then~$G$ has an \ffactor.
\end{lemma}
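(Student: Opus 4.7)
The plan is to derive this lemma as a direct corollary of Lemma~\ref{thm:generalfFactorDegreeCondition} by setting $a := 1$ and $b := r$. With this choice, $f$ indeed maps into $\{a, a+1, \ldots, b\} = \{1, \ldots, r\}$ and $\sum_{v \in V} f(v)$ is even by assumption, so it only remains to verify the two numerical hypotheses of Lemma~\ref{thm:generalfFactorDegreeCondition} under the assumptions $\delta_G \ge n - r - 1$ and $n \ge (r+1)^2$.

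For the minimum-degree condition, I would show that $n - r - 1 \ge \frac{r}{r+1} n$. Multiplying both sides by $r+1$ and rearranging yields the equivalent inequality $n \ge r^2 + 2r + 1 = (r+1)^2$, which is precisely the size assumption. Thus $\delta_G \ge n - r - 1 \ge \frac{b}{a+b} |V|$, as required.

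For the second condition, I need $n > \frac{a+b}{a}(a+b-3) = (r+1)(r-2)$. Since $(r+1)^2 = (r+1)(r+1) > (r+1)(r-2)$ for every $r \ge 1$, the assumption $n \ge (r+1)^2$ suffices. Both hypotheses of Lemma~\ref{thm:generalfFactorDegreeCondition} therefore hold, and the lemma yields an $f$-factor of $G$.

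There is no real obstacle here; the content is the choice of parameters and the observation that the threshold $(r+1)^2$ is tailored exactly to make the minimum-degree bound $n - r - 1 \ge \frac{r}{r+1}n$ hold with equality in the tightest case, while also being large enough to dominate the secondary bound $(r+1)(r-2)$.
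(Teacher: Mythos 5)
Your proof is correct and follows exactly the paper's argument: apply Lemma~\ref{thm:generalfFactorDegreeCondition} with $a=1$, $b=r$, verify $\delta_G \ge n-r-1 \ge \frac{r}{r+1}n$ (equivalent to $n \ge (r+1)^2$), and note $(r+1)(r-2) < (r+1)^2 \le n$. Your write-up is, if anything, slightly more explicit about the algebraic equivalences than the paper's.
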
%
\begin{proof}
	Set~$a = 1$ and~$b = r$. Then, $\delta_G \ge n-r-1 \ge \frac{b}{a+b} n =
        \frac{r}{r+1} n$ holds if~$n \ge \factorSizeBound$, which is
        true by assumption.
 	Also, $\frac{a+b}{a}(b+a-3) = (r + 1)(r-2) = r^2 - r - 2 <
        \factorSizeBound \le n$ holds, and thus all conditions of \cref{thm:generalfFactorDegreeCondition} are fulfilled.
\end{proof}
We now have all ingredients to show that we can upper-bound~$k$ by~$\realizationSizeBound$ or solve the given instance of \DCE[e$^+$] in polynomial time.
The main technical statement towards this is the following.
\begin{lemma}\label{lem:DegreeFactorRealization}
	Let~$I := (G=(V,E),k,r,\degList)$ be an instance of \DCE[e$^+$] with~$k \ge \realizationSizeBound$ and~$V = \{v_1, \ldots, v_n\}$.
	If there exists a~$k' \in \{\realizationSizeBound, \ldots,k\}$ such that~$(\deg(v_1), \ldots, \deg(v_n),2k',r,\numList)$ with~$\numList(i) := \degList(v_i)$ is a yes-instance of~\NCE, then~$I$ is a yes-instance of \DCE[e$^+$].
\end{lemma}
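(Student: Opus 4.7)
The plan is to interpret the NCE yes-certificate as a \emph{demand function} and realize these demands through an \fFactor of the complement graph. Let $d_1', \ldots, d_n'$ be the target degrees guaranteed by the NCE yes-instance. Define $f \colon V \to \{0, 1, \ldots, r\}$ by $f(v_i) := d_i' - \deg(v_i)$. Finding an edge set $E' \subseteq \binom{V}{2} \setminus E$ of size $k'$ with $\deg_{G+E'}(v_i) = d_i' \in \degList(v_i)$ for every $i$ is exactly the same as finding an \ffactor of~$\overline{G}$ for this~$f$; since $k' \le k$, such an $E'$ would immediately witness that~$I$ is a yes-instance.

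The main obstacle is that \cref{cor:fFactorDegreeCondition} requires $f(v) \ge 1$ everywhere, while vertices already satisfied by the NCE solution have $f(v_i) = 0$. To sidestep this, I would restrict attention to $V^+ := \{v \in V \mid f(v) \ge 1\}$ and look for an \ffactor of the induced subgraph $\overline{G}[V^+]$, letting the vertices in $V \setminus V^+$ contribute no edges and keep their degree $d_i = d_i'$.

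It then remains to verify the hypotheses of \cref{cor:fFactorDegreeCondition} for $\overline{G}[V^+]$ with $n' := |V^+|$. First, since $d_i' \le r$ and $d_i \le d_i'$, we have $\Delta_G \le r$, so every vertex has at least $n-1-r$ neighbours in~$\overline{G}$, of which at most $n - n'$ can lie outside $V^+$; hence $\delta_{\overline{G}[V^+]} \ge n' - 1 - r$. Second, by construction $f$ restricts to a map $V^+ \to \{1, \ldots, r\}$. Third, $\sum_{v \in V^+} f(v) = \sum_i (d_i' - d_i) = 2k'$ is even. Finally, since each $f(v) \le r$, we get $n' \ge 2k'/r \ge 2(r+1)^2 \ge \factorSizeBound$, where the bound $k' \ge \realizationSizeBound$ is used precisely at this point. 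Thus \cref{cor:fFactorDegreeCondition} produces an \ffactor $E' \subseteq \binom{V^+}{2}$ of $\overline{G}[V^+]$ with $|E'| = k'$, and this $E'$ solves $I$.

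The only delicate point is the size-and-degree bookkeeping when passing from $\overline{G}$ to $\overline{G}[V^+]$; both conditions survive the restriction because the maximum possible demand per vertex is~$r$, which simultaneously forces $|V^+|$ to be large and keeps $\Delta_G$ small. Everything else is a direct translation of the NCE certificate into a factor-realization problem.
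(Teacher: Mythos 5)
Your proposal is correct and follows essentially the same route as the paper's proof: your set $V^+$ is exactly the paper's set $A$ of affected vertices, and the verification of the hypotheses of \cref{cor:fFactorDegreeCondition} (minimum degree at least $|A|-r-1$ via $\Delta_G\le r$, demands in $\{1,\ldots,r\}$, even total $2k'$, and $|A|\ge 2k'/r\ge 2(r+1)^2$) matches the paper's argument step by step.
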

\begin{proof}
	Assume that $(\deg(v_1),\ldots,\deg(v_n),2k',r,\numList)$ is a yes-instance of \NCE. 
	Let~$d_1',\ldots,d_n'$ be integers such that~$d_i' \in \degList(v_i)$, $\sum_{i=1}^{n} d_i' - \deg(v_i) = 2k'$, and~$d_i' \ge d_i$.
	Hence, we know that the degree constraints can numerically be satisfied, giving rise to a new target degree~$d_i'$ for each vertex~$v_i$. 
	Let~$A := \{v_i \in V \mid d_i' > \deg(v_i)\}$ denote the set of \emph{affected} vertices containing all vertices which require addition of at least one edge in order to fulfill their degree constraints.
	It remains to show that the degree sequence of the affected vertices can in fact be realized by adding~$k'$ edges to~$G[A]$. 
	To this end, it is sufficient to prove the existence of an $f$-factor in the complement graph~$\overline{G[A]}$ with~$f(v_i) := d_i' - \deg(v_i) \in \{1,\ldots,r\}$ for all~$v_i \in A$ since such an $f$-factor contains exactly the~$k'$ edges we want to add to~$G$. 
	Thus, it remains to check that all conditions of \cref{cor:fFactorDegreeCondition} are indeed satisfied to conclude the existence of the sought $f$-factor. 
	First, note that~$\delta_{\overline{G[A]}} \ge |A| -r -1$ since $\Delta_{G[A]} \le r$. 
	Moreover, $\sum_{v_i\in A} (d_i' - \deg(v_i)) = 2k' \le |A|r$, and thus~$|A| \ge 2k'/r \ge 2\factorSizeBound$. 
	Finally, $\sum_{v_i\in A}f(v_i) = 2k'$ is even and thus \cref{cor:fFactorDegreeCondition} applies. 
\end{proof}
As \NCE is polynomial-time solvable, \cref{lem:DegreeFactorRealization} states a \emph{win-win situation}: either the solution is bounded in size or can be found in polynomial time.
From this and \cref{thm:krpolykernelDCEa}, we obtain the polynomial-size problem kernel.

\begin{theorem}\label{thm:rpolykernelDCEa}
	\DCE[e$^+$] admits a problem kernel containing~$O(r^5)$ vertices computable in~$O(k^2 \cdot r \cdot n + m + |\tau|)$ time.
\end{theorem}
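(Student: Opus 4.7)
The plan is a win-win combination of Theorem~\ref{thm:krpolykernelDCEa} and Lemma~\ref{lem:DegreeFactorRealization}: I use the polynomial-time solvability of \NCE either to certify the whole \DCE[e$^+$] instance as a yes-instance or to safely shrink the budget $k$ below~\realizationSizeBound{}. If already $k < \realizationSizeBound$, Theorem~\ref{thm:krpolykernelDCEa} alone yields a kernel with $O(kr^2) = O(r^5)$ vertices and no further work is needed, so I will assume $k \ge \realizationSizeBound$.

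In this case I would read off the degree sequence $\deg(v_1), \ldots, \deg(v_n)$ of $G$ and run the dynamic program of Lemma~\ref{lem:polyNCE} once on the \NCE instance with budget $2k$ and $\numList(i) := \degList(v_i)$; a single DP pass fills the table entries $T[n, 2k']$ simultaneously for every $k' \le k$ in $O(nkr)$ time. If some $k' \in \{\realizationSizeBound, \ldots, k\}$ returns \texttt{true}, then Lemma~\ref{lem:DegreeFactorRealization} certifies that the input is already a yes-instance of \DCE[e$^+$], and I output a trivial constant-size yes-instance as the kernel.

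Otherwise \NCE is a no-instance for every $k' \in \{\realizationSizeBound, \ldots, k\}$. Since any hypothetical solution to \DCE[e$^+$] that adds exactly $j$ edges induces a yes-instance of \NCE with parameter $2j$, contraposition shows that every potential solution must use strictly fewer than \realizationSizeBound{} edges. I therefore replace the budget $k$ by $\realizationSizeBound - 1$ without changing the answer, and apply Theorem~\ref{thm:krpolykernelDCEa} to the modified instance; the resulting kernel has $O(\realizationSizeBound \cdot r^2) = O(r^5)$ vertices. The running time is dominated by the single \NCE pass plus the linear-time kernelization of Theorem~\ref{thm:krpolykernelDCEa}, comfortably within the stated $O(k^2\cdot r\cdot n + m + |\degList|)$ bound.

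The only conceptually nontrivial step is the safety of the budget reduction in the second case, which is just the contrapositive of the trivial direction ``any \DCE[e$^+$]-solution of size $j$ yields a yes-answer of \NCE with parameter $2j$''; so I do not expect any genuine technical obstacle, and the remainder is plugging $k = \realizationSizeBound - 1$ into the bound of Theorem~\ref{thm:krpolykernelDCEa}.
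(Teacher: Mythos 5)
Your proposal is correct and follows essentially the same win-win argument as the paper: check \NCE for every budget $2k'$ with $k'\in\{\realizationSizeBound,\ldots,k\}$, return a trivial yes-instance via \cref{lem:DegreeFactorRealization} if any succeeds, and otherwise shrink $k$ to (roughly) \realizationSizeBound{} and apply \cref{thm:krpolykernelDCEa}. The only deviation is your observation that one DP table with budget $2k$ answers all values $k'$ simultaneously in $O(nkr)$ time, which is a slight efficiency improvement over the paper's $k$ separate \NCE runs but does not change the stated $O(k^2\cdot r\cdot n + m + |\degList|)$ bound.
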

\begin{proof}
	Let $I := (G,k,r,\degList)$ be an instance of \DCE[e$^+$].
	We distinguish two cases concerning the size of~$k$.
	\begin{description}
		\item[Case 1. $k > \realizationSizeBound$:] 
		We solve for all~$k' \in \{\realizationSizeBound, \ldots,k\}$ the corresponding \NCE formulation.
		If for one~$k'$ we encounter a yes-instance of the \NCE formulation, then, justified by \cref{lem:DegreeFactorRealization}, we return a trivial yes-instance of constant size.
		By \cref{lem:polyNCE}, this can be done in polynomial time.    
		Otherwise, as each solution for \DCE[e$^+$] can be transferred to a solution of \NCE, it follows that there is no solution for~$I$ of size~$k'$ for any~$k' \in \{\realizationSizeBound, \ldots,k\}$. 
		Thus, $I$ is a yes-instance if and only if~$(G,\realizationSizeBound,r,\degList)$ is a yes-instance
		Hence, set~$k := \realizationSizeBound$ and proceed as in the Case~2.

		\item[Case 2. $k \le \realizationSizeBound$:] 
		We simply run the kernelization algorithm from \cref{thm:krpolykernelDCEa} on~$I$ to obtain an~$O(r^5)$-vertex problem kernel. 
	\end{description}
        Concerning the running time, observe that we have to solve at most~$k$ times an instance of \NCE.
	By \cref{lem:polyNCE}, we can determine in~$O(k \cdot r \cdot n)$ time for each of these at most~$k$ instances whether it is a yes- or no-instance.
	If one instance is a yes-instance, due to \cref{lem:DegreeFactorRealization}, then the kernelization algorithm can return a trivial yes-instance in constant time.
	Otherwise, we apply \cref{thm:krpolykernelDCEa} in~$O(m + |\tau|)$ time.
	Overall, this gives a running time of~$O(k^2 \cdot r \cdot n + m + |\tau|)$.
\end{proof}

Due to the bound on~$k$ given by \cref{lem:DegreeFactorRealization}, we can infer from \cref{thm:rpolykernelDCEa} and \cref{cor:DCEe+fpt-rk} the following.

\begin{corollary}
	\DCE[e$^+$] can be solved in~$r^{O(r^3)} + O(k^2 \cdot r \cdot n +m + |\degList|)$ time.
\end{corollary}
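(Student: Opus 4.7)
The plan is to chain together the two previously established results. First I would apply the kernelization algorithm from \cref{thm:rpolykernelDCEa} to the input instance~$(G,k,r,\degList)$; by that theorem, this takes~$O(k^2 \cdot r \cdot n + m + |\degList|)$ time and produces an equivalent instance on at most~$O(r^5)$ vertices. Crucially, as a side-effect of the case distinction inside the proof of \cref{thm:rpolykernelDCEa}, the parameter~$k$ of the reduced instance is bounded by~$\realizationSizeBound = O(r^3)$: either the kernelization already returns a trivial instance, or it has first replaced~$k$ by~$\realizationSizeBound$ and then produced the~$O(r^5)$-vertex kernel with this small value of~$k$.

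Next I would feed the kernel into the brute-force procedure from \cref{cor:DCEe+fpt-rk}, which solves any \DCE[e$^+$] instance in~$(kr)^{O(k)}$ time plus a linear term. On the kernel, the linear term~$O(m + |\degList|)$ is polynomial in~$r$ (since the kernel has~$O(r^5)$ vertices of maximum degree~$\le r$), and so it is dominated by the exponential term. Using~$k = O(r^3)$, the exponential term becomes
\[
(kr)^{O(k)} = (r^3 \cdot r)^{O(r^3)} = r^{O(r^3)}.
\]

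Adding the kernelization time, which is measured in the parameters of the \emph{original} instance, yields the claimed overall bound~$r^{O(r^3)} + O(k^2 \cdot r \cdot n + m + |\degList|)$. There is no real obstacle here: the only minor point to check carefully is that the ``$k$'' appearing in the exponent of \cref{cor:DCEe+fpt-rk} refers to the budget of the instance actually being solved (namely the kernel), so that after kernelization we are entitled to plug in~$k = O(r^3)$ rather than the original (possibly much larger) budget.
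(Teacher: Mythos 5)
Your proposal is correct and matches the paper's own (implicit) argument: the corollary is obtained exactly by combining the bound $k \le \realizationSizeBound = O(r^3)$ from \cref{lem:DegreeFactorRealization} (as carried out inside \cref{thm:rpolykernelDCEa}) with the kernel of \cref{thm:rpolykernelDCEa} and the brute-force bound $(kr)^{O(k)}$ of \cref{cor:DCEe+fpt-rk}, which with $k = O(r^3)$ gives $r^{O(r^3)}$. Your remark that the exponent's $k$ refers to the reduced instance's budget is precisely the point the paper relies on, so nothing is missing.
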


\section{A General Approach for Degree Sequence Completion} \label{sec:genFramework}

In the previous section, we dealt with the problem \DCE[e$^+$], where one only has to \emph{locally} satisfy the degree of each vertex.
In this section, we show how the presented ideas for \DCE[e$^+$] can also be used to solve more \emph{globally} defined problems where the degree sequence of the solution graph~$G'$ has to fulfill a given property.
For example, consider the problem of adding a minimum number of edges to obtain a regular graph, that is, a graph where all vertices have the same degree. 
In this case the degree of a vertex in the solution is a priori not known but depends on the degrees of the other vertices. 
Using \fFactor, this particular problem can be solved in polynomial time; however, there are many NP-hard problems of this kind, including \kDegAnon as will be discussed in \cref{sec:Application}.

The \emph{degree sequence} of a graph~$G = (V,E)$ with $n$~vertices is the $n$-tuple containing the vertex degrees in nonincreasing order.
Then, for some tuple property~$\Pi$, we consider the following problem:
\probDefEnv{\defDecprob{$\Pi$-Degree Sequence Completion (\PiEA)}
{A graph~$G=(V,E)$, an integer~$k \in \N$.}
{Is there a set of edges~$E' \subseteq \binom{V}{2} \setminus E$ with $|E'| \le k$ such that the degree sequence of~$G + E'$ fulfills~$\Pi$?} 
}


Note that \PiEA is not a generalization of \DCE[e$^+$] since in \DCE[e$^+$] one can require for two vertices~$u$ and~$v$ of the same degree that~$u$ gets two more incident edges and~$v$ not.
This cannot be expressed in \PiEA.
We remark that the results stated in this section can be extended to hold for a generalized version of \PiEA where a ``degree list function''~$\degList$ is given as additional input and the vertices in the solution graph~$G'$ also have to satisfy~$\degList$, thus generalizing \DCE[e$^+$].
For simplicity, however, we stick to the easier problem definition as stated above.

\subsection{Fixed-Parameter Tractability of $\Pi$-DSC}
In this subsection, we first generalize the ideas behind \cref{thm:krpolykernelDCEa} to show fixed-parameter tractability of \PiEA with respect to the combined parameter~$(k,\Delta_G)$.
Then, we present an adjusted  version of \cref{lem:DegreeFactorRealization} and apply it to show fixed-parameter tractability for \PiEA with respect to the parameter~$\Delta_{G'}$.
Clearly, a prerequisite for both these results is that the following problem has to be fixed-parameter tractable with respect to the parameter~$\Delta_T:=\max\{d_1,\ldots,d_n\}$.
\probDefEnv{\defDecprob{\PiDec}
{An integer tuple~$T = (d_1, \ldots, d_n)$.}
{Does~$T$ fulfill~$\Pi$?}}

For the next result, we need some definitions. 
For $0\le d\le \Delta_G$, let~$D_G(d) := \{v \in V \mid \deg_G(v)=d\}$ be the \emph{block} of degree~$d$, that is, the set of all vertices with degree~$d$ in~$G$.
A subset~$V' \subseteq V$ is an \emph{\blockSet} if~$V'$ contains for every~$d \in \{0, \ldots, \Delta_G\}$ exactly~$\min \{\alpha, |D_G(d)|\}$ vertices.
Recall that~$\alpha = k(\Delta_G + 2)$ (see \cref{sec:polyKernel-rk}), and notice the similarity of \blockSet{s} and \typeSet{s} (see \Cref{def:alpha-type-set}).
This similarity is not a coincidence as we use ideas of \cref{rr:removeNonCoreVertex} and \cref{lem:rruleSafe} to obtain the following lemma.

\begin{lemma}\label{lem:weakKernelSet}
	Let~$I := (G = (V,E),k)$ be a yes-instance of \PiEA and let~$C \subseteq V$ be an \blockSet. 
	Then, there exists a set of edges~$E' \subseteq \binom{C}{2} \setminus E$ with $|E'| \le k$ such that the degree sequence of~$G + E'$ fulfills~$\Pi$.
\end{lemma}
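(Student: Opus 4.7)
The plan is to mimic the replacement argument from \cref{lem:rruleSafe}, but adapted so that the swap preserves the \emph{degree sequence} of the modified graph instead of the individual target degrees of each vertex.

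Start with any solution $E' \subseteq \binom{V}{2} \setminus E$ of size at most $k$ whose addition yields a degree sequence satisfying $\Pi$. If $V(E') \subseteq C$, we are done, so assume there exists a vertex $v \in V(E') \setminus C$. Let $i \le k$ denote the number of edges of $E'$ incident to $v$, and let $d := \deg_G(v)$. Since $v \notin C$, the block $D_G(d)$ must contain more than $\alpha$ vertices and thus $C$ contains exactly $\alpha = (\Delta_G + 2)k$ of them. I will argue, as in \cref{lem:rruleSafe}, that inside $C \cap D_G(d)$ there is a vertex $u$ that is neither incident to any edge of $E'$ nor adjacent (in $G$) to any neighbor of $v$ in $G[E']$; that is,
\[
  u \in (C \cap D_G(d)) \setminus \bigl(V(E') \cup N_G(N_{G[E']}(v))\bigr).
\]
The set $V(E')$ has at most $2k$ vertices, and $N_G(N_{G[E']}(v))$ contains at most $k\Delta_G$ vertices, so the forbidden set has size strictly less than $2k + k\Delta_G \le \alpha$, ensuring that a suitable $u$ exists.

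Now I define $E''$ from $E'$ by rerouting every edge $\{v,w\} \in E'$ to $\{u,w\}$, keeping all other edges of $E'$ intact; formally,
\[
  E'' := \{\,\{u,w\} \mid \{v,w\} \in E'\,\} \cup \{\,e \in E' \mid v \notin e\,\}.
\]
By the choice of $u$ we have $E'' \cap E = \emptyset$, no multi-edges are created, and $|E''| = |E'| \le k$. The key point is that the multiset of degrees of $G + E''$ equals that of $G + E'$: the only vertices whose degrees change are $v$ and $u$, and the swap exchanges their contributions, since
\[
  \deg_{G+E''}(u) = \deg_G(u) + i = d + i = \deg_{G+E'}(v), \qquad \deg_{G+E''}(v) = \deg_G(v) = d = \deg_{G+E'}(u),
\]
using that $u$ was not incident to any edge in $E'$ and that $\deg_G(u) = \deg_G(v) = d$. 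Consequently the degree sequences of $G+E'$ and $G+E''$ coincide, so $G+E''$ also satisfies~$\Pi$, and $|V(E'') \setminus C| = |V(E') \setminus C| - 1$.

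Iterating this replacement at most $2k$ times produces a solution with all endpoints inside $C$, which is the edge set the lemma asks for. The main subtlety is the degree-sequence bookkeeping in the swap step; one has to check simultaneously that $u$ is not used by $E'$ (so that its degree in $G+E'$ equals $\deg_G(u)$), that $u$ is not adjacent in $G$ to any current $E'$-neighbor of $v$ (so that rerouting does not create parallel edges), and that $u$ sits in the same degree block as $v$ (so that the multiset of degrees is truly preserved). All three requirements are accommodated simultaneously by the counting bound $|V(E') \cup N_G(N_{G[E']}(v))| < \alpha = |C \cap D_G(d)|$.
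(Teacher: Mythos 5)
Your proposal is correct and follows essentially the same route as the paper's own proof: the identical replacement argument adapted from \cref{lem:rruleSafe}, choosing a substitute $u$ in $C \cap D_G(\deg_G(v))$ outside $V(E') \cup N_G(N_{G[E']}(v))$ via the same counting bound against $\alpha = k(\Delta_G+2)$, and iterating the swap, with the same observation that placing $u$ in the same degree block preserves the degree sequence and hence $\Pi$.
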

\begin{proof}
	Let~$I := (G = (V,E),k)$ be a yes-instance of \PiEA and let~$C
        \subseteq V$ be an \blockSet.
	Thus, there exists a set of edges~$E' \subseteq \binom{V}{2} \setminus E$ with $|E'| \le k$ such that the degree sequence~$\mathcal{D} = (d_1', \ldots, d_n')$ of~$G' := G + E'$ fulfills~$\Pi$.
	If~$V(E') \subseteq C$, then there is nothing to prove.
	Hence, assume that there exists a vertex~$v \in V(E') \setminus C$.
	We show how to construct from~$E'$ an edge set~$E''$ for~$I$ such that~$(V(E') \setminus C) \setminus V(E'') = \{v\}$ and the degree sequence of~$G'' := G + E''$ equals~$\mathcal{D}$.
	Since~$v$ is not in the \blockSet~$C$, it follows that~$|C \cap D_G(\deg_G(v))| = \alpha = k(\Delta_G + 2)$.
	Next, we prove that there is a vertex~$u\in D_G(\deg_G(v))$ such that~$u \notin V(E')$ and~$u \notin N_G(N_{G[E']}(v))$, that is, $u$~is not incident to any edge in~$E'$ and also not adjacent to any vertex that is connected to~$v$ by an edge in~$E'$.
	Note that ``replacing''~$v$ by such a vertex~$u$ in the edge set~$E'$ yields~$E''$:
	Formally, for $$E'' := \{\{u,w\} \mid \{v,w\} \in E'\} \cup \{\{w_1,w_2\} \mid \{w_1,w_2\} \in E' \wedge w_1 \neq v \wedge w_2 \neq v\},$$ it holds that~$E'' \cap E = \emptyset$ and since~$u \in D_G(\deg_G(v))$, the degree sequence of~$G+E''$ is~$\mathcal{D}$.
	Hence, it remains to show that such a vertex~$u$ exists, that is, $(C \cap D_G(\deg_G(v))) \setminus (V(E') \cup N_G(N_{G[E']}(v)))$ is indeed non-empty.
	This is true since~$|C \cap D_G(\deg_G(v))| = k(\Delta_G +2)$, whereas $|V(E') \cup N_G(N_{G[E']}(v))| < 2k + k\Delta_G$.
	By repeatedly applying this procedure, we obtain a solution~$E''' \subseteq \binom{C}{2} \setminus E$ with $|E'''| \le k$ such that the degree sequence of~$G + E'''$ fulfills~$\Pi$.
\end{proof}
In the context of \DCE, we introduced the notion of safely removing a vertex subset to obtain a problem kernel.
On the contrary, in the context of \PiEA, it seems impossible to
remove vertices in general without further knowledge about the tuple property~$\Pi$.
Thus, \cref{lem:weakKernelSet} does not lead to a problem kernel but only to a reduced search space for a solution, namely any \blockSet.
Clearly, an \blockSet~$C$ can be computed in polynomial time. 
Then, one can simply try out all possibilities to add edges with endpoints in~$C$ and check whether in one of the cases the degree sequence of the resulting graph satisfies~$\Pi$.
As~$|C| \le (\Delta_G + 2)k(\Delta_G+1)$, there are at
most~$O(2^{((\Delta_G + 2)k(\Delta_G+1))^2})$ possible subsets of edges to add.
Altogether, this leads to the following theorem.

\begin{theorem}\label{thm:PiEAfptDeltak}
	Let~$\Pi$ be some tuple property.
	If \PiDec is fixed-parameter tractable with respect to the
        maximum tuple entry~$\Delta_T$, then \PiEA is fixed-parameter tractable with respect to~$(k,\Delta_G)$.
\end{theorem}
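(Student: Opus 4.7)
The plan is to combine \cref{lem:weakKernelSet} with brute-force enumeration over all candidate edge sets inside a fixed \blockSet, and then invoke the assumed fpt algorithm for \PiDec on the resulting degree sequence.

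First, compute an \blockSet~$C \subseteq V$ in polynomial time; by construction, $|C| \le \alpha \cdot \Delta_G = (\Delta_G+2)k\Delta_G$, which is bounded by a function of $(k, \Delta_G)$. By \cref{lem:weakKernelSet}, if $I = (G,k)$ is a yes-instance of~\PiEA, then there is a witnessing edge set $E' \subseteq \binom{C}{2}\setminus E$ with $|E'| \le k$. Hence it is sufficient to enumerate every such candidate~$E'$, compute the degree sequence of $G + E'$, and test whether it fulfills~$\Pi$.

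The number of candidates is at most $\sum_{i=0}^{k} \binom{\binom{|C|}{2}}{i} \in ((\Delta_G+2)k\Delta_G)^{O(k)}$, which is a function of $(k,\Delta_G)$ times a polynomial in the input size. For each candidate we construct $G+E'$ and read off its degree sequence in polynomial time. The crucial observation for the \PiDec call is that the maximum entry $\Delta_T$ of the resulting degree sequence satisfies $\Delta_T \le \Delta_G + k$, because adding $k$ edges can raise any vertex degree by at most~$k$. Thus $\Delta_T$ is bounded by a function of $(k,\Delta_G)$, and by assumption each \PiDec test runs in $g(\Delta_G+k)\cdot n^{O(1)}$ time for some computable function~$g$.

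Combining the enumeration bound with the per-candidate \PiDec cost yields a total running time of $((\Delta_G+2)k\Delta_G)^{O(k)} \cdot g(\Delta_G+k) \cdot n^{O(1)}$, which is of the form $h(k,\Delta_G)\cdot n^{O(1)}$. This establishes fixed-parameter tractability of~\PiEA with respect to~$(k,\Delta_G)$. The only part requiring care is verifying that $\Delta_T$ remains bounded in the parameter when invoking \PiDec; everything else is routine enumeration enabled by \cref{lem:weakKernelSet}.
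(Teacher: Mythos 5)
Your proof is correct and follows essentially the same route as the paper: compute an \blockSet~$C$, invoke \cref{lem:weakKernelSet} to restrict the search to edge sets inside~$C$, enumerate candidates, and apply the assumed fpt algorithm for \PiDec, noting $\Delta_T \le \Delta_G + k$. The only (harmless) difference is that you enumerate only edge sets of size at most~$k$, giving a slightly sharper bound than the paper's $O(2^{((\Delta_G+2)k\Delta_G)^2})$ enumeration of all subsets, and you make the bound on~$\Delta_T$ explicit, which the paper leaves implicit.
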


\paragraph{Bounding the Solution Size~$k$ in~$\Delta_{G'}$}
We now show how to extend the ideas of \cref{sec:param-r} to the
context of \PiEA in order to bound the solution size~$k$ by a polynomial in~$\Delta_{G'}$.
The general procedure still is the one inspired by \citet{LT08}: 
Solve the number problem corresponding to \PiEA on the degree sequence of the input graph and then try to ``realize'' the solution.
To this end, we define the corresponding number problem as follows:
\probDefEnv{\defDecprob{$\Pi$-Number Sequence Completion (\PiNA)}
{Positive integers~$d_1, \ldots, d_n, k, \Delta$.}
{Are there~$n$ nonnegative integers~$x_1, \ldots, x_n$ with~$\sum_{i=1}^{n} x_i = k$ such that~$(d_1+x_1,\ldots,d_n+x_n)$ fulfills~$\Pi$ and $d_i+x_i \le \Delta$?}}
%
With these problem definitions, we can now generalize \cref{lem:DegreeFactorRealization}.

\begin{lemma}\label{lem:generalDegreeFactorRealization}
	Let~$I := (G,k)$ be an instance of~\PiEA with~$V = \{v_1, \ldots, v_n\}$ and~$k \ge \Delta_{G'}(\Delta_{G'}+1)^2$.
	If there exists a~$k' \in \{\Delta_{G'}(\Delta_{G'}+1)^2, \ldots,k\}$ such that the corresponding \PiNA instance~$I':=(\deg(v_1),\ldots,\deg(v_n), 2k', \Delta_{G'})$ is a yes-instance, then~$I$ is a yes-instance.
\end{lemma}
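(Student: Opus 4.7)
The plan is to mirror the proof of \cref{lem:DegreeFactorRealization} with $r$ replaced by $\Delta_{G'}$, using the \PiNA solution as the ``target degree sequence'' and then realizing it via an $f$-factor in the complement graph restricted to the affected vertices.

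First, I would take a solution $x_1,\dotsc,x_n$ to the \PiNA instance $I'$, so $\sum_{i} x_i = 2k'$, each $\deg(v_i) + x_i \le \Delta_{G'}$, and $(\deg(v_1)+x_1,\dotsc,\deg(v_n)+x_n)$ fulfills $\Pi$. Define the set of \emph{affected} vertices $A := \{v_i \in V \mid x_i > 0\}$ and the demand function $f\colon A \to \{1,\dotsc,\Delta_{G'}\}$ by $f(v_i) := x_i$. The goal becomes to find an edge set $E' \subseteq \binom{A}{2} \setminus E$ with $|E'| = k' \le k$ such that $\deg_{G+E'}(v_i) = \deg_G(v_i) + x_i$ for all $v_i \in A$; such an $E'$ corresponds exactly to an \ffactor in the complement graph $\overline{G[A]}$, and adding it to $G$ produces a graph whose degree sequence fulfills $\Pi$ by choice of the $x_i$'s.

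Next I would verify the hypotheses of \cref{cor:fFactorDegreeCondition} applied to $\overline{G[A]}$ with parameter $r := \Delta_{G'}$. Since each affected vertex $v_i$ ends up with degree at most $\Delta_{G'}$ in $G+E'$, its degree in $G[A]$ is also at most $\Delta_{G'}$, so $\delta_{\overline{G[A]}} \ge |A| - \Delta_{G'} - 1$. The size bound follows from a standard counting argument: $|A| \cdot \Delta_{G'} \ge \sum_{v_i \in A} x_i = 2k' \ge 2\Delta_{G'}(\Delta_{G'}+1)^2$, hence $|A| \ge 2(\Delta_{G'}+1)^2 \ge (\Delta_{G'}+1)^2$. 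The parity condition $\sum_{v_i\in A} f(v_i) = 2k'$ is trivially even. Hence an \ffactor in $\overline{G[A]}$ exists and yields the desired $E'$.

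The main obstacle is simply the clean transfer of the degree/size accounting from \cref{lem:DegreeFactorRealization}, in particular recognizing that $\Delta_{G'}$ uniformly bounds both the maximum degree in $G[A]$ (so the complement is dense enough) and the individual demands $x_i$ (so the number of affected vertices is at least $(\Delta_{G'}+1)^2$). Once both are in place, \cref{cor:fFactorDegreeCondition} applies verbatim and the rest of the argument is bookkeeping. No new combinatorial idea is needed beyond what already appears in the edge-additions kernel for \DCE[e$^+$].
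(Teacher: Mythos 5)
Your proposal is correct and follows essentially the same route as the paper's proof: take the \PiNA solution $x_1,\dotsc,x_n$, restrict to the affected vertices $A$, bound $|A|\ge 2(\Delta_{G'}+1)^2$ and $\delta_{\overline{G[A]}}\ge |A|-\Delta_{G'}-1$, and invoke \cref{cor:fFactorDegreeCondition} with $r:=\Delta_{G'}$ to realize the demands as an \ffactor in $\overline{G[A]}$, yielding $k'\le k$ added edges whose resulting degree sequence fulfills $\Pi$. The only cosmetic remark is that the degree bound on $G[A]$ should be justified directly from the \PiNA constraint $\deg(v_i)+x_i\le\Delta_{G'}$ (with $x_i\ge 1$) rather than from the not-yet-constructed graph $G+E'$, but this is exactly the bound the paper uses as well.
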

\begin{proof}
	Let~$I':=(\deg(v_1),\ldots,\deg(v_n), 2k',\Delta_{G'})$ with~$k' \in \{\Delta_{G'}(\Delta_{G'}+1)^2, \ldots,k\}$ be a yes-instance of \PiNA and let~$x_1, \ldots, x_n$ denote a solution for~$I'$.
	Defining the function~$f\colon V \to \N$ as~$f(v_i) := x_i$, we now prove that~$\overline{G}$ contains an \ffactor which forms a solution~$E'$ for~$I$.
	Denote by~$A$ the set of affected vertices, formally, $A := \{v_i \in V \mid 0 < x_i\}$.
	Observe that~$|A| \ge 2k' / \Delta_{G'} \ge 2(\Delta_{G'}+1)^2$ as~$k' \ge \Delta_{G'}(\Delta_{G'}+1)^2$.
	Furthermore, as the maximum degree~$\Delta_G$ in~$G$ is upper-bounded by$\Delta_{G'}$, it follows that~$\overline{G[A]}$ has minimum degree at least~$|A| - \Delta_{G'} - 1$.
	Finally, observe that~$f(v_i)\in \{1,\ldots,\Delta_{G'}\}$ for
        each~$v_i \in A$ and that~$\sum_{v_i\in A}f(v_i)=2k'$ is even.
        Hence, by \cref{cor:fFactorDegreeCondition}, $\overline{G[A]}$ contains an \ffactor.
	Thus, $\overline{G}$ also contains an \ffactor~$G' = (V, E')$
        and since~$(\deg(v_1)+x_1,\ldots,\deg(v_n)+x_n)$
        fulfills~$\Pi$, it follows that~$E'$ is a solution for~$I$, implying that~$I$ is a yes-instance.
\end{proof}
Let function~$g(|I|)$ denote the running time for solving the \PiNA instance~$I$.
Clearly, if there is a solution for an instance of \PiEA, then there
also exists a solution for the corresponding \PiNA instance. It
follows that we can decide whether there is a large solution for \PiEA
(adding at least~$\Delta_{G'}(\Delta_{G'}+1)^2$ edges) in~$k \cdot g(n \log(n))$ time.
Hence, we arrive at the following win-win situation:

\begin{lemma}\label{lem:boundSolSizeInDelta}
        Let $g$ denote the running time for solving \PiNA.
	There is an algorithm running in~$g(n\log(n))\cdot n^{O(1)}$ time that
        given an instance~$I := (G,k)$ of \PiEA returns ``yes'' or ``no''
        such that if it answers ``yes'', then~$I$ is a yes-instance, and otherwise
	$I$ is a yes-instance if and only if~$(G,\min\{k, \Delta_{G'}(\Delta_{G'}+1)^2\})$ is a yes-instance.
\end{lemma}
Using \cref{lem:boundSolSizeInDelta}, we can transfer the
fixed-parameter tractability for \PiNA with respect to~$\Delta$ to a
fixed-parameter tractability result for \PiEA with respect to~$\Delta_{G'}$.
Note that~$\Delta_{G'} \le k + \Delta_G$, that is, $\Delta_{G'}$ is
a smaller and thus ``stronger'' parameter~\cite{KN12}.
Also, showing \PiNA to be fixed-parameter tractable with respect
to~$\Delta$ might be a significantly easier task than proving
fixed-parameter tractability for \PiEA with respect to~$\Delta_{G'}$
directly since the graph structure can be completely ignored.

\begin{theorem} \label{thm:PiEA-fpt}
	If \PiNA is fixed-parameter tractable with respect to~$\Delta$, then \PiEA is fixed-parameter tractable with respect to~$\Delta_{G'}$.
\end{theorem}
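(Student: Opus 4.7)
The plan is to obtain the fixed-parameter algorithm as a composition of \cref{cor:boundSolSizeInDelta} and \cref{thm:PiEAfptDeltak}: the corollary shrinks the solution-size parameter to a polynomial in~$\Delta_{G'}$, after which the theorem (applied with the shrunken~$k$ together with the trivial bound~$\Delta_G \le \Delta_{G'}$) does the remaining work. Since only edge additions are allowed, $\Delta_G \le \Delta_{G'}$ always holds, which is precisely what permits the combined parameter~$(k,\Delta_G)$ of \cref{thm:PiEAfptDeltak} to be replaced by the single parameter~$\Delta_{G'}$.

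First, I would run the algorithm from \cref{cor:boundSolSizeInDelta} on the input instance~$(G,k)$ with~$\Delta := \Delta_{G'}$. Because \PiNA is assumed to be fixed-parameter tractable in~$\Delta$, every \PiNA call inside the corollary takes~$f(\Delta_{G'}) \cdot n^{O(1)}$ time, so the whole corollary runs in FPT time with respect to~$\Delta_{G'}$. If the algorithm answers ``yes'', we are done; otherwise we may safely replace the budget by~$\min\{k,\Delta_{G'}(\Delta_{G'}+1)^2\}$, which bounds~$k$ by a polynomial in~$\Delta_{G'}$.

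Next, I would verify the hypothesis of \cref{thm:PiEAfptDeltak}, namely that \PiDec is fixed-parameter tractable with respect to the maximum tuple entry~$\Delta_T$. This is immediate from the assumption on \PiNA: for a \PiDec instance~$T = (d_1,\ldots,d_n)$, call the \PiNA algorithm on~$(d_1,\ldots,d_n, 0, \Delta_T)$; the constraint~$\sum_i x_i = 0$ forces~$x_i = 0$ for every~$i$, so the \PiNA oracle returns ``yes'' if and only if~$T$ itself satisfies~$\Pi$, and it does so in time FPT in~$\Delta_T$. \cref{thm:PiEAfptDeltak} therefore solves \PiEA in time FPT in~$(k,\Delta_G)$; combined with the bounds~$k \le \Delta_{G'}(\Delta_{G'}+1)^2$ and~$\Delta_G \le \Delta_{G'}$, this yields FPT time in~$\Delta_{G'}$.

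The only delicate point will be that the algorithm must speak of~$\Delta_{G'}$ without knowing the solution~$G'$ in advance. Since~$\Delta_G \le \Delta_{G'} \le \Delta_G + k$, one can either take~$\Delta_{G'}$ to be given as part of the parameter (an explicit upper bound on the maximum degree of the sought solution), or iterate over all candidate values~$D \in \{\Delta_G,\ldots,\Delta_G+k\}$ and return ``yes'' as soon as one of them succeeds; this iteration costs only a polynomial factor and preserves FPT-ness. Beyond this bit of bookkeeping, no new ideas are required: the statement is essentially the composition of the two preceding results, the technical heart of which already resides in \cref{lem:generalDegreeFactorRealization} and \cref{lem:weakKernelSet}.
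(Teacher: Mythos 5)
Your proof is correct and follows essentially the same route as the paper's: run \cref{cor:boundSolSizeInDelta} to either answer ``yes'' or shrink the budget to $k \le \Delta_{G'}(\Delta_{G'}+1)^2$, and then invoke \cref{thm:PiEAfptDeltak} together with the observation~$\Delta_G \le \Delta_{G'}$. Your two extra remarks---deriving fixed-parameter tractability of \PiDec with respect to~$\Delta_T$ from the \PiNA assumption via a zero budget, and the bookkeeping for not knowing~$\Delta_{G'}$ in advance---merely make explicit details that the paper's proof leaves implicit.
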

\begin{proof}
    Let~$I:=(G,k)$ be a \PiEA instance. First, note that~$\Delta_G
    \le \Delta_{G'}$ always holds since we are only adding edges to~$G$.
    Thus, if~$k\le \Delta_{G'}(\Delta_{G'}+1)^2$, then the
    fixed-parameter tractability with respect to~$(k,\Delta_G)$ from
    \cref{thm:PiEAfptDeltak} yields fixed-parameter tractability with
    respect to~$\Delta_{G'}$. Otherwise, we use
    \cref{lem:boundSolSizeInDelta} to check whether there exists a large
    solution of size at least~$\Delta_{G'}(\Delta_{G'}+1)^2$. Hence,
    by assumption, in $f(\Delta_{G'})\cdot n^{O(1)}$~time for some
    computable function~$f$, we either
    find that~$I$ is a yes-instance or we can assume that~$k \le
    \Delta_{G'}(\Delta_{G'}+1)^2$, which altogether yields fixed-parameter
    tractability with respect to~$\Delta_{G'}$.
\end{proof}
If \PiNA can be solved in polynomial time, then \cref{lem:boundSolSizeInDelta} shows that we can assume that~$k \le \Delta_{G'}(\Delta_{G'}+1)^2$. 
Thus, as in the \DCE[e$^+$] setting (\autoref{thm:rpolykernelDCEa}), polynomial kernels with respect to~$(k,\Delta_G)$ transfer to the parameter~$\Delta_{G'}$, leading to the following.
\begin{theorem}\label{thm:PiEA-polyKernel}
	If \PiNA is polynomial-time solvable and \PiEA admits a
        polynomial kernel with respect to~$(k,\Delta_G)$, then \PiEA
        also admits a polynomial kernel with respect to~$\Delta_{G'}$.
\end{theorem}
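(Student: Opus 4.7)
The plan is to chain the win-win algorithm from \cref{cor:boundSolSizeInDelta} with the assumed polynomial kernelization for the combined parameter $(k,\Delta_G)$. Given a \PiEA instance $(G,k)$, the algorithm of \cref{cor:boundSolSizeInDelta} runs in polynomial time (since \PiNA is polynomial-time solvable) and either certifies that $(G,k)$ is a yes-instance, in which case we output a trivial constant-size yes-instance, or it guarantees that $(G,k)$ is equivalent to $(G,\min\{k,\Delta_{G'}(\Delta_{G'}+1)^2\})$. In the latter case we replace $k$ by this smaller value, so from now on $k \le \Delta_{G'}(\Delta_{G'}+1)^2 = O(\Delta_{G'}^{3})$.

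After this first step, both coordinates of the combined parameter $(k,\Delta_G)$ are polynomially bounded in~$\Delta_{G'}$: we have just bounded~$k$, and $\Delta_G \le \Delta_{G'}$ holds trivially because \PiEA only allows edge additions. I would then invoke the assumed polynomial kernelization for $(k,\Delta_G)$ on the reduced instance, obtaining an equivalent instance whose size is polynomial in~$k+\Delta_G$ and hence polynomial in~$\Delta_{G'}$. By the paper's kernelization definition (size bounded by a function of the original parameter), this is precisely a polynomial kernel with respect to~$\Delta_{G'}$.

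The main subtlety, which I would not expect to be a technical obstacle but which must be stated carefully, is conceptual: the quantity $\Delta_{G'}$ is determined by the (unknown) solution rather than by the input alone. Fortunately, \cref{cor:boundSolSizeInDelta} is already phrased so that its input is only~$(G,k)$ while its correctness guarantee merely references~$\Delta_{G'}$, so no explicit knowledge of~$\Delta_{G'}$ is required to execute either stage of the pipeline. The resulting two-stage procedure is therefore a well-defined polynomial-time preprocessing whose output has size polynomial in~$\Delta_{G'}$, yielding the claimed kernel.
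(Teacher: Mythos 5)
Your proposal matches the paper's own argument: it chains the polynomial-time win-win test of \cref{cor:boundSolSizeInDelta} (available because \PiNA is polynomial-time solvable) with the assumed $(k,\Delta_G)$-kernel, using that afterwards $k \le \Delta_{G'}(\Delta_{G'}+1)^2$ and $\Delta_G \le \Delta_{G'}$, exactly as in the transfer from \cref{thm:krpolykernelDCEa} to \cref{thm:rpolykernelDCEa}. The subtlety you flag about $\Delta_{G'}$ being solution-dependent is glossed over in the same way by the paper, so your treatment is consistent with its proof.
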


\subsection{Applications}\label{sec:Application}
As our general approach is inspired by ideas of \citet{HNNS15},
it is not surprising that it can be applied to ``their'' \kDegAnon problem:

\probDefEnv{\defDecprob{\kDegAnon}
{An undirected graph $G = (V, E)$ and two positive integers $k$ and $s$.}
{Is there an edge set~$E'$ over~$V$ of size at most~$s$ such that~$G' := G + E'$
  is~$k$-\emph{anonymous}, that is, for each vertex~$v \in V$,
  there are at least~$k-1$ other vertices in~$G'$ having the same degree?}}
The property~$\Pi$ of being~$k$-anonymous clearly can be decided in polynomial time for a given degree sequence and thus, by \cref{thm:PiEAfptDeltak}, we immediately get fixed-parameter tractability with respect to~$(s,\Delta_G)$.
For example, \cref{thm:PiEA-polyKernel} then basically yields the kernel result obtained by~\citet{HNNS15}. 
There are more general versions of \kDegAnon as proposed by \citet{CKSV13}. 
For example, just a given subset of the vertices has to be anonymized or the vertices have labels.
As in each of these generalizations one can decide in polynomial time whether a given graph satisfies the particular anonymity requirement, \cref{thm:PiEAfptDeltak} applies also in these scenarios. 
However, checking in which of these more general settings the
conditions of \cref{thm:PiEA-fpt} or \cref{thm:PiEA-polyKernel} are fulfilled has to remain future work.

Besides the graph anonymization setting, one could think of further, more generalized constraints on the degree sequence. 
For example, if~$p_i(d)$ denotes how often degree~$i$ appears in a degree sequence~$\mathcal{D}$, then being~$k$-anonymous translates into~$p_i(\mathcal{D}_{G'}) \ge k$ for all degrees~$i$ occurring in the degree sequence~$\mathcal{D}_{G'}$ of the modified graph~$G'$.
Now, it is natural to consider not only a lower bound~$k \le p_i(\mathcal{D})$, but also an upper bound~$p_i(\mathcal{D}) \le u$ or maybe even a set of allowed frequencies~$p_i(\mathcal{D}) \in F_i \subseteq \N$.
Constraints like this allow to express some properties not of individual degrees itself but on the whole distribution of the degrees in the resulting sequence.
For example, to have some ``balancedness'' one can require that each occurring degree occurs exactly~$\ell$ times for some~$\ell \in \N$ \cite{CLMO89}.
To obtain some sort of ``robustness'' it might be useful to ask for an~$h$-index of~$\ell$, that is, in the solution graph there are at least~$\ell$ vertices with degree at least~$\ell$ \cite{ES12}.

Another range of problems which fit naturally into our framework involves completion problems to a graph class that is completely characterized by degree sequences. 
Many results concerning the relation between a degree sequence and the corresponding realizing graph are known and can be found in the literature; one of the first is the result by \citet{EG60} showing which degree sequences are in fact graphic, that is, realizable by a graph.
%
%
%
Based on this characterization researchers characterized for example pseudo-split, split, and threshold graphs completely by their degree sequences~\cite{HIS78,HS81,MP94}.
%
The NP-hard \textsc{Split Graph Completion}~\cite{NSS01} problem, for example, is known to be fixed-parameter tractable with respect to the allowed number of edge additions~\cite{Cai96}.
Note, however, that for the mentioned graph classes polynomial kernels with respect to the parameter~$\Delta_{G'}$ trivially exist because here we always have~$\sqrt{n} \le \Delta_G$. 

We finish with another interesting example of a class of graphs characterized by their degree sequence:
A graph is a \emph{unigraph} if it is determined by its degree sequence up to isomorphism~\cite{BLS99}. 
%
Given a degree sequence~$\mathcal{D} = (d_1, \ldots, d_n)$, one can decide in linear time whether~$\mathcal{D}$ defines a unigraph~\cite{BCP11,KL75}.
Again, by \cref{thm:PiEA-fpt}, we conclude fixed-parameter
tractability for the unigraph completion problem with respect to the parameter maximum degree in the solution graph~$\Delta_{G'}$.

\section{Conclusion}
We proposed a method for deriving efficient and effective preprocessing algorithms for degree sequence completion problems. \DCE[e$^+$] served as our main illustrating example. 
Roughly speaking, the core of the approach (as basically already used in previous work~\cite{LT08,HNNS15}) consists of extracting the degree sequence from the input graph, efficiently solving  a simpler number editing problem, and translating the obtained solution back into a solution for the graph problem using $f$-factors. 
While previous work~\cite{LT08,HNNS15} was specifically tailored towards an application for degree anonymization in graphs, we generalized the approach by filtering out problem-specific parts and ``universal'' parts. Thus, whenever one can solve these problem-specific parts efficiently, we can automatically obtain efficient preprocessing and fixed-parameter tractability results.

Our approach seems promising for future empirical investigations; an experimental work has already been performed for \kDegAnon~\cite{HHN14}. 
Another line of future research could be to study polynomial-time approximation algorithms for the considered degree sequence completion problems. 
Perhaps parts of our preprocessing approach might find use here as well.
A more specific open question concerning our work would be how to deal with additional connectivity requirements for the generated graphs.

\bibliographystyle{abbrvnat}
\bibliography{bibliography}

\end{document}